\def\diracspacing{0.7pt}
\newcommand{\ketbra}[2]{|\hspace{\diracspacing} #1 \rangle \langle #2 \hspace{\diracspacing} |}
\DeclareMathOperator{\Tr}{Tr}
\newtheorem{Lmm}{Lemma}
\newtheorem{Thm}{Theorem}
\begin{document}
\title{Satellite-Based Quantum Key Distribution in the Presence of Bypass Channels}
\author{Masoud Ghalaii}
\affiliation{School of Electronic and Electrical Engineering, University of Leeds, Leeds LS2 9JT, United Kingdom}
\affiliation{Department of Computer Science, University of York, York YO10 5GH, United Kingdom}
\author{Sima Bahrani}
\affiliation{Department of Electrical and Electronic Engineering, University of Bristol, Bristol BS8 1UB, United Kingdom}
\author{Carlo Liorni}
\affiliation{Institute f{\"u}r Theoretische Physik III, Heinrich Heine Universit{\"a}t, D-40225 D{\"u}sseldorf, Germany}
\author{Federico Grasselli}
\affiliation{Institute f{\"u}r Theoretische Physik III, Heinrich Heine Universit{\"a}t, D-40225 D{\"u}sseldorf, Germany}
\author{Hermann Kampermann}
\affiliation{Institute f{\"u}r Theoretische Physik III, Heinrich Heine Universit{\"a}t, D-40225 D{\"u}sseldorf, Germany}
\author{Lewis Wooltorton}
\affiliation{Department of Electrical and Electronic Engineering, University of Bristol, Bristol BS8 1UB, United Kingdom}
\affiliation{Quantum Engineering Centre for Doctoral Training, H. H. Wills Physics Laboratory, University of Bristol, Bristol BS8 1FD, United Kingdom}
\affiliation{Department of Mathematics, University of York, Heslington, York, YO10 5DD, United Kingdom}
\author{Rupesh Kumar}
\affiliation{Department of Physics, University of York, York YO10 5DD, United Kingdom}
\affiliation{York Centre for Quantum Technologies, University of York, York, United Kingdom}
\author{Stefano Pirandola}
\affiliation{Department of Computer Science, University of York, York YO10 5GH, United Kingdom}
\author{Timothy P. Spiller}
\affiliation{Department of Physics, University of York, York YO10 5DD, United Kingdom}
\affiliation{York Centre for Quantum Technologies, University of York, York, United Kingdom}
\author{Alexander Ling}
\affiliation{Centre for Quantum Technologies, National University of Singapore, Singapore}
\author{Bruno Huttner}
\affiliation{ID Quantique, Geneva, Switzerland}
\author{Mohsen Razavi}
\affiliation{School of Electronic and Electrical Engineering, University of Leeds, Leeds LS2 9JT, United Kingdom}

\begin{abstract} 
The security of prepare-and-measure satellite-based quantum key distribution (QKD), under restricted eavesdropping scenarios, is addressed. We particularly consider cases where the eavesdropper, Eve, has limited access to the transmitted signal by Alice, and/or Bob's receiver station. This restriction is modeled by lossy channels between Alice/Bob and Eve, where the transmissivity of such channels can, in principle, be bounded by monitoring techniques. An artefact of such lossy channels is the possibility of having {\it bypass} channels, those which are not accessible to Eve, but may not necessarily be characterized by the users either. This creates interesting, unexplored, scenarios for analyzing QKD security. In this paper, we obtain generic bounds on the key rate in the presence of bypass channels and apply them to continuous-variable QKD protocols with Gaussian encoding with direct and reverse reconciliation. We find regimes of operation in which the above restrictions on Eve can considerably improve system performance. We also develop customised bounds for several protocols in the BB84 family and show that, in certain regimes, even the simple protocol of BB84 with weak coherent pulses is able to offer positive key rates at high channel losses, which would otherwise be impossible under an unrestricted Eve. In this case the limitation on Eve would allow Alice to send signals with larger intensities than the optimal value under an ideal Eve, which effectively reduces the effective channel loss. In all these cases, the part of the transmitted signal that does not reach Eve can play a non-trivial role in specifying the achievable key rate. Our work opens up new security frameworks for spaceborne quantum communications systems. 
\end{abstract}

\maketitle

\section{Introduction}

Satellite-based quantum communications links \cite{Bonato_2009,MoliSanchez2009,MeyerScott_PRA1011,Bourgoin_2014,Boone_PRA2015,Hosseinidehaj_SatQKD2017,Bedington_NatComm2017, Nauerth_NatPhoton_2013,Wang_NatPhoton2013,Bourgoin_PRA2015,Vallone_PRL2015,Gunthner:17} can be part of a global solution to quantum key distribution (QKD) networks or, more generally, the quantum Internet \cite{Kimble:QuInternet,Pirandola:QInternet,Pirandola:AQCrypt, liorni2020quantum}. QKD provides two parties with a secret key that can be used in cryptographic protocols, such as one-time pad encryption. In the absence of practical quantum repeaters, however, point-to-point fiber-based QKD links are often limited to a distance of several hundred kilometres \cite{QKD833, Pittaluga2021, PRL509km, Zhang_Optica2018, 1000kmQKD}. In contrast, free-space QKD relying on ground-to-satellite, satellite-to-ground, and/or satellite-to-satellite quantum communications links can potentially offer secure key exchange over thousands of kilometers \cite{Liao_Nat_2017,Liao:ChinaAustria_PRL2018}. The successful launch of the Chinese QKD satellite in 2017, and the experiments carried out since then \cite{Liao_NatPhoton_2017,Liao_Nat_2017,Ren_Nat_2017, Liao:ChinaAustria_PRL2018}, has particularly been a game changer in bringing the field into a new exciting development phase while a substantial global effort is directed at finding practical solutions to the wide-scale deployment of QKD systems. That said, satellite-based quantum communications comes at an additional price for launching and operating possibly dedicated satellites, as well as with some restrictions on accessibility and the achievable key rate. This manuscript seeks solutions that can enhance the benefits reaped from investing in this technology by looking into relevant threat models {to a line-of-sight link, as in satellite-based QKD, while maintaining the key security features of QKD systems}.

To make the above vision possible, and, particularly, to deploy satellite-based QKD in large scales, certain technological challenges must be addressed. For instance, a secure satellite-based QKD system must combat loss and noise effects in the link. A satellite-to-ground link would also face additional challenges due to pointing errors and atmospheric turbulence, which impact system performance. Ultimate limits, as well as achievable rates of specific QKD protocols, have recently been investigated considering diffraction, extinction, background noise and fading in such links  \cite{Pirandola:FS2021,Pirandola:Sat2021,Ghalaii:StrongTB2021,Ghalaii:FSMDI2022}.  
Such analyses as well as recent experimental demonstrations suggest that a typical low-earth-orbit (LEO) satellite-to-ground link could suffer around 30-40 dB of loss for a modest-size receiver telescope \cite{Liao_Nat_2017}, and possibly with night operation only in order to minimize the background noise. This would imply that, under nominal security assumptions that give Eve maximum possible control over the channel, many QKD protocols may struggle to offer sufficiently high, if any, positive key rates. 

The above limitations are partly because of the assumptions made in our security analysis, e.g., that the channel in its entirety is assumed to be under the control of a potential eavesdropper. Whether such an assumption is necessary/realistic in satellite-based QKD, which relies on line-of-sight links, needs to be scrutinized. Relaxing this assumption could open up new opportunities that have been discounted, but which, if proved to be viable, could offer additional options for implementation and commercial exploitation. 

With the above idea in mind, recently, several works have addressed the security of satellite-based QKD in wiretap channels \cite{Ling-SatQKD-Man, Saikat_RestricedEve_PRApplied, Hugo_RestricedEve_PRApplied}, while earlier the security of QKD in the framework of physical layer security was considered \cite{Sasaki_2017}. The work in \cite{Ling-SatQKD-Man} considers a passive eavesdropping scenario for a wiretap channel \cite{Wyner-wiretap} and compares the key rate achievable under an unrestricted Eve for several QKD protocols with alternative schemes that they refer to as photon key distribution (PKD). They overall observe more resilience to noise in high-loss regimes for their PKD schemes, which allows them to cover longer distances. The work in \cite{Saikat_RestricedEve_PRApplied, Hugo_RestricedEve_PRApplied} considers the in-principle achievable key rate, in a wiretap channel, when only one of Alice and Bob measures their signal, and the other one holds onto a quantum state, on which they can in principle do an optimal measurement to maximise the key rate. They will then observe a boost in the key rate so long as the channel between Alice and Eve is lossier than that of Alice and Bob. In \cite{Hugo_RestricedEve_PRApplied} they further claim that by considering a protected zone around Alice (the satellite) and Bob (the ground station) they can ensure that the above condition holds if the presence of an eavesdropper in orbit can be ruled out. For the latter, they will then consider some constraints on celestial mechanics to show how difficult it would be for Eve to eavesdrop in this line-of-sight link.

In this manuscript, we study the security of prepare-and-measure (P\&M) satellite-based QKD for a restricted Eve without restricting ourselves to the case of the wiretap channel. This allows us to consider more generic cases and takes an important step toward having a verifiable set of assumptions. In the case of wiretap channels considered in \cite{Saikat_RestricedEve_PRApplied, Hugo_RestricedEve_PRApplied}, it will be difficult to ensure through experimental observations that the channel is indeed a wiretap channel{, or to specify the relevant channel parameters. One can potentially use monitoring techniques to rule out the possibility of having eavesdropping objects in the line-of-sight link. Even if we trust our employed monitoring technique, any such technique would, however, be bound by a certain resolution, and it is still possible that they miss objects of smaller than a certain size}. The potential users should then choose whether they are satisfied with these assumptions, or whether for provable security they wish to use a full QKD protocol.

{Note that the physical size of the devices an eavesdropper may have used has not been a matter of contention in conventional QKD systems. In conventional security proofs, we only care about the impact Eve may have on the quantum signals that Alice and Bob exchange, and they bound the leaked information to Eve based on the observations that they make in the quantum communication part of the protocol. By introducing monitoring techniques, we are not directly measuring the quantum interactions that Eve may have with the exchanged quantum signals, but instead we are trying to bound some classical aspects, such as size, of Eve's apparatus. While a super-powerful Eve could, in principle, fool our monitoring system too, in practice, this would add an additional layer of complexity to Eve's attack.}  

In our case, the primary assumption that we make about the potential eavesdropper is on the collection efficiency of her apparatus when it comes to interacting with the transmitted signal from Alice's telescope. This collection efficiency can then be bounded based on the size of devices that Eve has employed within the line-of-sight link. The corresponding size can, in principle, be bounded using reliable monitoring techniques that can be employed in parallel to quantum signal transmission. The same argument and methodology can be used to bound the loss between Eve and Bob. 

It is interesting to note that specifying the minimum loss that Alice's signal would go through before being collected by Eve does not specify the entire channel between Alice and Bob, and it is still possible that part of Alice's signal reaches Bob without going through Eve. This latter channel, which we refer to as a {\it bypass} channel, has a non-trivial role in the achievable key rate, and one of our key contributions here is to analyse QKD security in the presence of such bypass channels. Moreover, unlike the wiretap channel model, we can now consider scenarios where Alice-Eve loss is lower than that of Alice-Bob. By performing the security analysis under the above conditions, we can then bound the achievable key rate for a restricted Eve using a set of assumptions that are in-principle verifiable.  This turns out to offer better performance, as compared to unrestricted eavesdropping, without necessarily compromising on our security assumptions.

{Note that there is a difference between ``bypass'' channels, to which eavesdroppers do not have access although they may still indirectly use to their advantage, and ``side'' channels, which are assumed to be fully accessible to the eavesdropper. While the issue of side channels has been considered for years in QKD literature \cite{MDI-QKD1, side-channel1, side-channel2}, the topic of bypass channels is quite new, and we believe that this paper offers an intriguing formulation of this problem,  and then derives relevant generic and customized security bounds for the emerging settings.}

The key contributions of this paper are as follows: 
\begin{itemize}
\item We develop models for restricted eavesdropping whose elements can, in principle, be characterized using monitoring techniques;
\item We obtain generic bounds on achievable key rates in P\&M QKD setups in the presence of an uncharacterized bypass channel not accessible to Eve;
\item We show that, in certain practical regimes, such bounds enable continuous-variable (CV) QKD to offer positive key rates in satellite-based implementations; and
\item We develop customised bounds for discrete-variable (DV) QKD systems that rely on photon-number channels, and improve their performance under restricted eavesdropping.
\end{itemize}

The rest of this paper is organized as follows. In Sec.~\ref{sec:scenarios}, we describe our setting and the motivations behind the model we have adopted for the restricted Eve. In Sec.~\ref{Sec:SecProof}, we offer some generic results applicable to QKD protocols in the presence of bypass channels. We apply these results to CV-QKD protocols, in Sec.~\ref{cv-qkd}, and customize them to the case of DV QKD protocols, such as BB84 \cite{bennett1984brassard}, in Sec.~\ref{dv-qkd}. We conclude the paper in Sec.~\ref{SecCon} with some discussions on the relevance of the results obtained and the way forward for other cases not considered in this paper. 

\section{Generic Models for Restricted Eavesdropping}
\label{sec:scenarios}

In this section, we model the key restriction we consider in this work on potential eavesdroppers in a satellite-based QKD system. One of the distinctive features of a satellite link, as compared to a fiber link, is that it is a line-of-sight link. While it may not be possible, for a link of around 500~km of length in the LEO case, to fully monitor the channel between Alice and Bob, one can employ monitoring techniques, such as light detection and ranging (LIDAR), to detect objects of a certain minimum size along the path. In fact, the same system and the corresponding optics that are being used for tracking and acquisition purposes can also be used to detect unwanted objects along the beam. In free-space LIDAR, the power received by the detection site is proportional to the effective area of the object, and scales inversely with the power four of the distance between the object and the LIDAR source. If the collected power is below a certain noise threshold, we cannot conclusively declare detecting an object, but we might be able, at any given distance, to set a bound on the maximum size that any undetected object may have. In fact, our preliminary calculations suggest that for a 500-km-long satellite link, and for low-power LIDAR systems used at both Alice and Bob stations, with some nominal assumptions, the largest undetected object within the beam width of our LIDAR sources is around a few centimetres in diameter; see Appendix~\ref{Carlo}. This is important because, for any effective eavesdropping activity in the P\&M scenario, Eve requires (i) to somehow collect the signals transmitted by Alice, or reflect it to some other collection point, and/or (ii) to somehow be able to send her own signals towards Bob's receiver. In the satellite scenario, full power collection/reflection requires telescopes/optical tools of a certain size, corresponding to the beam width, and manipulation of Bob's receiver might need powerful laser sources, especially if Eve's source is not fully aligned with Bob's telescope. This implies that the combination of limited size telescopes/devices used in the line-of-sight link for Eve and a monitored/protected zone around Alice box could restrict Eve to only receiving a fraction of what Alice has sent. This would be the first departure point from a maximally powerful Eve. In the second case, where Eve cannot replace the channel between herself and Bob with an ideal channel, any active attack by Eve will be affected by potentially a lossy channel that the protection zone around the receiver would enforce. This could further restrict Eve in implementing her attack scenario.

\begin{figure}[t]
\includegraphics[width=0.5\textwidth-15pt]{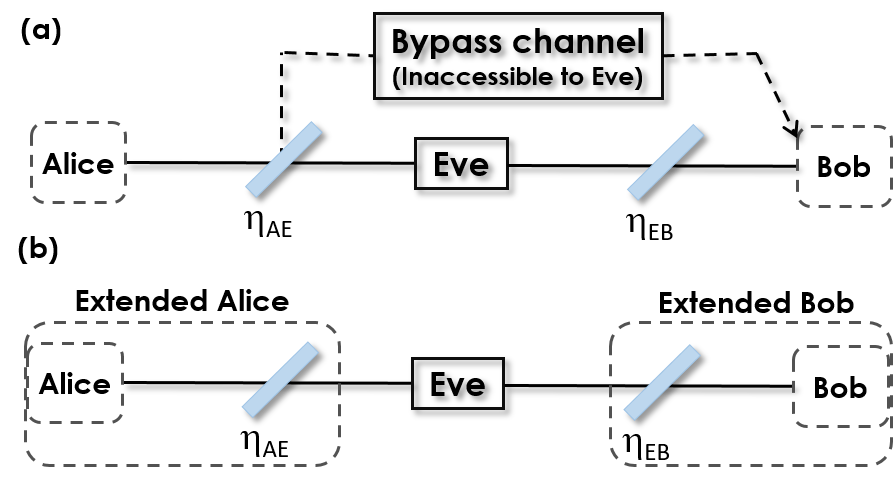}
\caption{(a) Restrictions imposed on Eve in terms of her collection efficiency, modelled by a beam splitter with transmissivity $\eta_{\rm AE}$, and her access to Bob's telescope via a beam splitter with transmissivity $\eta_{\rm EB}$. The part of the transmitted signal that does not go through Eve may still reach Bob via a bypass channel inaccessible to Eve. The signal lost at the second beam splitter, with transmissivity $\eta_{\rm EB}$, is assumed to be inaccessible to all parties. (b) A simplified model where the bypass channel in (a) is assumed to be not accessible to Bob. This assumption would effectively reduce the channel model in (a) to a typical prepare-and-measure QKD scenario with extended Alice's and Bob's boxes that contain some trusted lossy elements.
} 
\label{fig:chmodel}
\end{figure} 

In this work, we model the restrictions explained above, which can, in principle, be characterized by the employed monitoring systems, by lossy channels between Alice and Eve, and between Eve and Bob. In particular, as shown in Fig.~\ref{fig:chmodel}(a), we assume that a lossy channel with transmissivity $\eta_{\rm AE}$ connects Alice to Eve, and Eve has no access to the signals lost in this channel. Note that part of the lost signal can still reach Bob, and we cannot discount this possibility. This creates an interesting QKD scenario, where, in addition to the channel controlled by Eve, there is a {\it bypass} channel via which some signals can reach Bob. Eve has no access to this bypass channel, but Alice and Bob cannot necessarily characterize this channel either. The study of QKD security in the presence of such a bypass channel would generate interesting scenarios that we analyse in this paper. Similarly, we assume that every signal sent by Eve to Bob would go through a lossy channel with transmissivity  $\eta_{\rm EB}$, where Eve (and Bob) has no access to the lost signals on this channel. We do not impose any other restrictions on Eve except being bound by the laws of quantum mechanics. We investigate how these two restrictions affect the performance of a QKD system ran on such a link.

\begin{figure*}
\includegraphics[width=15cm]{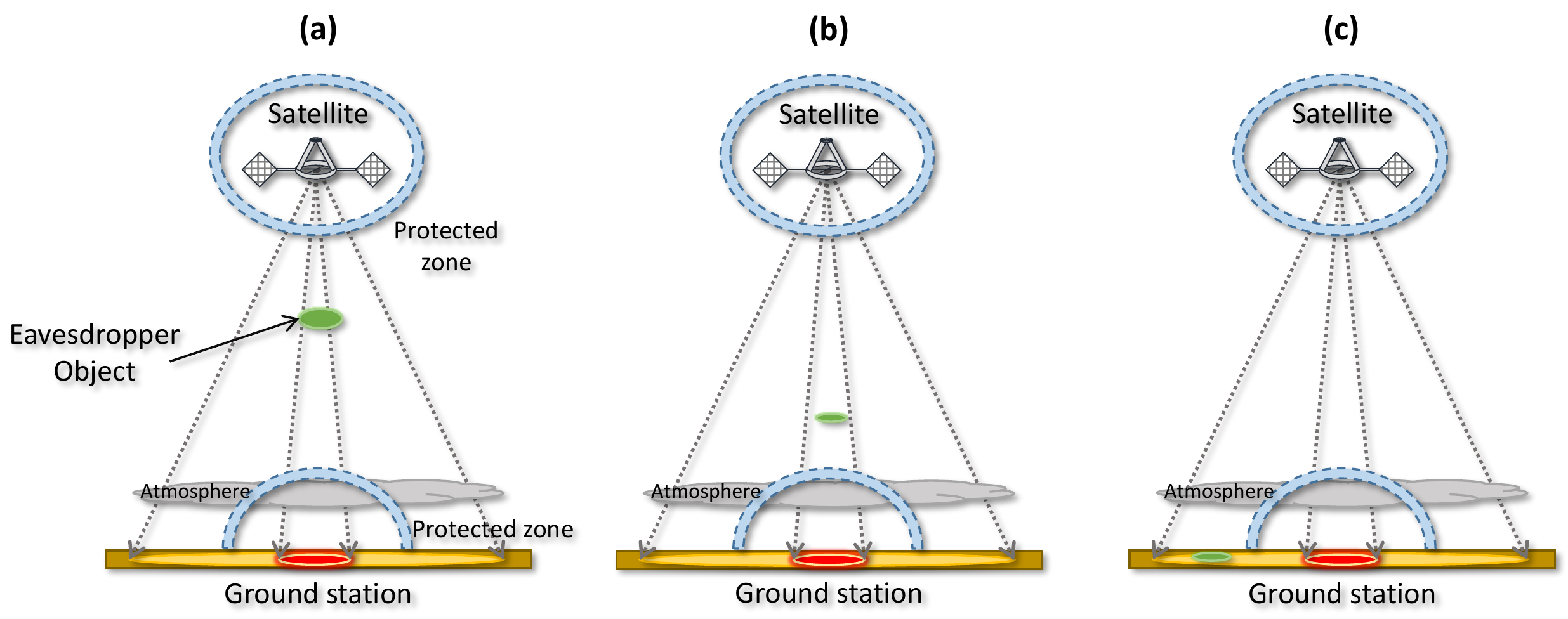}
\caption{Schematic view of a satellite-to-ground QKD link, with different restricted eavesdropping scenarios: (a) a semi-powerful Eve who, while does not capture the entire beam sent by the satellite (Alice), has access to the part that will be collected by the ground station (Bob); (b) An Eve with a telescope too small to capture the entire signal that reaches Bob; and (c) A passive Eve in a wiretap channel.} 
\label{fig:setup}
\end{figure*}

There are different scenarios that one can consider with the above generic restrictions. One possible scenario, shown in Fig.~\ref{fig:setup}(a), is when Eve's telescope is sufficiently large to capture all signals that would end up on Bob's telescope, but not necessarily large enough to capture the entire signal sent by Alice. This case corresponds to $\eta_{\rm AE}<1$, but possibly with $\eta_{\rm EB}$ close to one. Please note that when we are speaking of Eve, she is not restricted to operate only from one point in space. Another possibility is when Eve's telescope is assumed to be too small to capture the entire signal that would be received by Bob, in which case part of Alice's signal may reach Bob without Eve's intervention; see Fig.~\ref{fig:setup}(b). This case would result in intriguing scenarios especially when $\eta_{\rm AE} \ll 1$. We look at how we can capitalise on this restriction to increase the secret key rate in forthcoming sections. One last case, shown in Fig.~\ref{fig:setup}(c), is for when Eve is simply a passive receiver of Alice's signal without sending anything to Bob. This case corresponds to a small $\eta_{\rm AE}$ and $\eta_{\rm EB}=0$, and captures a passive attack on a wiretap channel \cite{Ling-SatQKD-Man}. These are just a few examples, but the important point is that the generic model proposed here for a natural restriction on Eve can capture many practical cases that could happen in reality, as well as the few cases considered thus far in the literature \cite{Ling-SatQKD-Man, Saikat_RestricedEve_PRApplied, Hugo_RestricedEve_PRApplied}.

Our objective in this paper is to find bounds on the secret key generation rate under the assumption that $\eta_{\rm AE}$ and $\eta_{\rm EB}$ are known to Alice and Bob. We separate the issue of how, in practice, we can find an upper bound for these parameters from the security proof that follows once this restrictive assumption is used. The latter will be discussed in Sec.~\ref{Sec:SecProof}, with particular examples on CV and DV QKD in Secs.~\ref{cv-qkd} and \ref{dv-qkd}, respectively. For the former, in Appendix~\ref{Carlo}, we consider a simple model to calculate the reflected power from an object (or a collection of objects with a similar effective size) with a certain reflectivity, in the line-of-sight link, assuming that a LIDAR system has been employed on both the satellite and ground station. If our LIDAR system detects an object of a certain size, we can then use that to bound $\eta_{\rm AE}$ and $\eta_{\rm EB}$. Even if the LIDAR systems do not detect any object, by making some nominal assumptions on the power budget on satellite and earth, the sensitivity of the LIDAR system, and the reflectivity of space objects, we can then find the maximum object size that may remain undetected by our LIDAR systems, and then accordingly upper bound $\eta_{\rm AE}$ and $\eta_{\rm EB}$. This preliminary analysis suggests that, in nominal working conditions, $\eta_{\rm EB}$ is greater than $\eta_{\rm AE}$, and can be close to 1, whereas $\eta_{\rm AE}$ can remain small. In our analysis in Secs.~\ref{cv-qkd} and \ref{dv-qkd} we then only consider the special case of $\eta_{\rm AE}<1$ at $\eta_{\rm EB}=1$, which is of practical interest.

In what follows, we first find some generic results for the key rate of the setup in Fig.~\ref{fig:chmodel}(a). Throughout the paper, the satellite is assumed to have the QKD encoder and the ground station would decode the received signals. We therefore mainly focus on prepare-and-measure schemes in the forthcoming sections. In particular, we consider the BB84 protocol with different types of sources, and CV-QKD with Gaussian encoding. One interesting point about the restricted Eve scenario is the possibility of designing new protocols that capitalize on Eve's imposed restrictions. For instance, as shown in \cite{Ling-SatQKD-Man}, in the case of a passive Eve, one can relax the requirement for using two mutually unbiased bases to come up with simpler protocols. Or, in the case of an ideal single-photon source (SPS) with a passive Eve, no privacy amplification may be needed \cite{Huttner-SatQKD}. In our setting, the bypass channel in Fig.~\ref{fig:chmodel}(a) can play a non-trivial role in determining the key rate, as we investigate next.

\section{Security Proof}
\label{Sec:SecProof}
In this section we aim at finding generic bounds on the secret key generation rate for the setup in Fig.~\ref{fig:chmodel}(a). { The key assumption in our analysis is that Alice and Bob can reliably characterize parameters $\eta_{\rm AE}$ and $\eta_{\rm EB}$ in Fig.~\ref{fig:chmodel}(a). Otherwise, we do not need to know the nature of the bypass channel, and the bypass channel, while inaccessible to Eve, remains uncharacterized by Alice and Bob. This is in contrast with what typically assumed in physical layer security, or earlier work on restricted eavesdropping, in which certain channel models are assumed \cite{Ling-SatQKD-Man, Saikat_RestricedEve_PRApplied, Hugo_RestricedEve_PRApplied, Sasaki_2017}.} 

To get some insight into the setting of Fig.~\ref{fig:chmodel}(a), one simplifying assumption, as shown in Fig.~\ref{fig:chmodel}(b), is to ignore the bypass channel and assume that no information would reach Bob via the bypass channel. 
{This assumption would effectively reduce the channel model in Fig.~\ref{fig:chmodel}(a) to a typical prepare-and-measure QKD scenario with extended Alice’s and Bob’s boxes that contain some trusted lossy elements. The secret key rate calculations in Fig.~\ref{fig:chmodel}(b) would then reduce to modifying existing security proofs to account for the trusted loss in the channel. This would provide us with a reference point to which we can compare the key rate of QKD systems with bypass channels as in Fig.~\ref{fig:chmodel}(a). On the one hand, having a bypass channel that Eve has no access to may suggest that Alice and Bob can share their secret key more easily implying that the key rate in scenario (b) is a lower bound to that of (a). On the other hand, because the bypass channel is not fully characterized by Alice and Bob, they need to consider the worst-case scenario, compatible with their observations, in which case Eve may end up being the beneficiary of the bypass channel.

One of our key contributions is to prove that, under a given set of experimental observations, the key rate of Fig.~\ref{fig:chmodel}(a) is always upper bounded by that of Fig.~\ref{fig:chmodel}(b). We label this result as Theorem 1 and will prove it in this section. That said, by properly formulating the problem, we can also see how the other intuition comes into play, and, under what scenarios, it may prevail. Lemma 1 will capture this other result. But, first, let us diligently formulate the two settings in Fig.~\ref{fig:chmodel}.}

In Figs.~\ref{fig:scenarios}(a) and (b), we have presented generic attack models, in the entanglement-based picture, for, respectively, the scenarios in Figs.~\ref{fig:chmodel}(a) and (b). Here, $|\psi_{AB}\rangle$ represents the initial bipartite entangled state generated by Alice, where one of its components is measured by measurement operator $M_A$ to give the classical outcome $X$, and its other component is sent to Bob. In Fig.~\ref{fig:scenarios}, we have used the same notation for the field modes at the input and output of a quantum operation. For instance, mode $B$ would go through the initial beam splitter, and then through Eve's system, followed by the second beam splitter before entering Bob's telescope, modelled by operator $\mathcal{E}_T$, and measurement operator $M_B$ resulting in a classical variable $Y$. The measurement operator $M_B$ effectively models the corresponding QKD measurements in the respective QKD protocol. Given that the bypass channel and Eve-controlled channels represent two independent spatial modes, the operator $\mathcal{E}_T$ effectively combines these two modes to generate outcome $Y$. For a physical telescope, these two modes are defined by what the telescope actually collects. In that case, this operation has to model a unitary evolution. We therefore assume $\mathcal{E}_T$ is a unitary map, in which case we need to introduce a second output mode, which we have denoted by $F_0$. In our setup, mode $F_0$ is not accessible to Bob, but it would be interesting to see what, in principle, is achievable for Alice and Bob if $F_0$ is available to Bob. Lemma 1 below considers this case. Other important components of Figs.~\ref{fig:scenarios}(a) and (b) are complete positive and trace preserving (CPTP) maps $\mathcal{E}$ and $\mathcal{E}'$, which, respectively, model the channel controlled by Eve and the bypass channel, with pure input states denoted by $|\psi_E\rangle$ and $|\psi_F\rangle$. In order to match the model in Fig.~\ref{fig:scenarios}(b) with that of Fig.~\ref{fig:chmodel}(b), we have introduced a trivial map $\mathcal{E}_V$ that maps every incoming state to the vacuum state $\ket{0}$. More specifically, the map $\mathcal{E}_V$ is a CPTP map with the following Kraus representation:
\begin{align}
\mathcal{E}_V(\rho) = \sum_{i} K_i \rho K_i^\dag \quad,\quad K_i:=\ketbra{0}{e_i} \label{Kraus-rep}
\end{align}
with $\{\ket{e_i}\}$ being an orthonormal basis for the Hilbert space where the input state $\rho$ lies in. This operation ensures that nothing but the vacuum state would be transferred via the bypass channel, which corresponds to the simplified scenario in Fig.~\ref{fig:chmodel}(b). Finally, the second input to both beam splitters in Fig.~\ref{fig:scenarios} is the vacuum state to model a lossy channel.

\begin{figure}[t]
\includegraphics[width=0.5\textwidth-15pt]{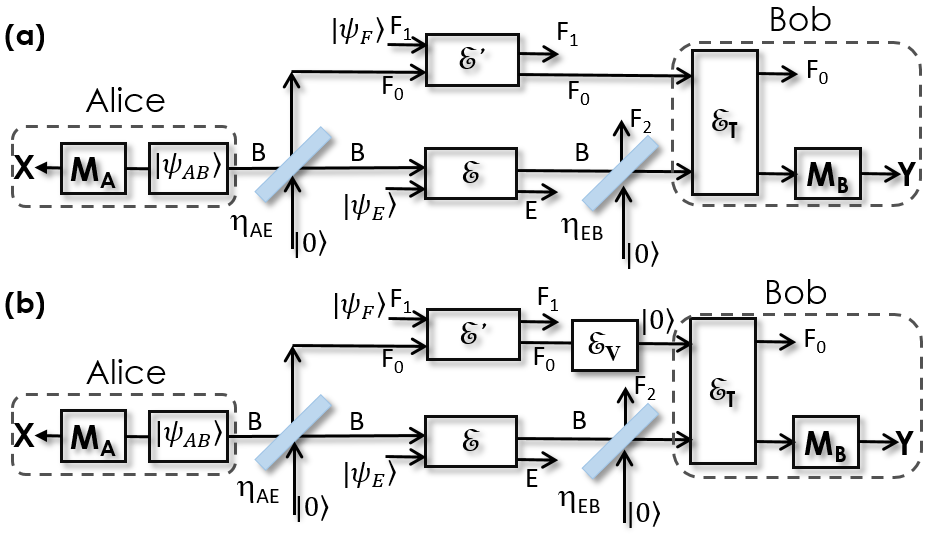}
\caption{(a) A generic attack model for a QKD system under restrictive assumptions on the collection efficiency and transmission efficiency of Eve's apparatus; (b) The attack model assuming that the bypass channel includes an infinitely high loss and only produces the vacuum state at its output. Notations are defined in the text.}
\label{fig:scenarios}
\end{figure}

For the above detailed settings, we now investigate how the key rate achievable in Fig.~\ref{fig:scenarios}(a), which corresponds to the main restrictions imposed on Eve in our work, compares with that of Fig.~\ref{fig:scenarios}(b), which further simplifies the channel and makes additional assumptions. {As discussed earlier, because Eve has no access to the bypasss channel, one may expect that the former cannot be lower than the latter. In Lemma~1, we prove that this intuition is correct in the case of direct reconciliation, provided that Eve's attack (map $\mathcal{E}$) is fixed in both scenarios of Fig.~\ref{fig:scenarios}(a) and Fig.~\ref{fig:scenarios}(b) and mode $F_0$ is available to Bob. However, from a security perspective, we cannot ensure that Eve would perform the same attack independently of the physical channel(s) linking Alice and Bob. Interestingly, when allowing for the worst-case attack by Eve in each scenario of Fig.~\ref{fig:scenarios}, and conditioned on the observed parameters in the QKD experiment, the achievable key rate in Fig.~\ref{fig:scenarios}(a) turns out to be upper bounded by that of Fig.~\ref{fig:scenarios}(b), as we prove in Theorem~1.} Note that the bypass channel $\mathcal{E}'$ is not necessarily known to Alice and Bob.

Let us first consider the case where mode $F_0$ is available to Bob {and Eve's attack is identical in both scenarios of Fig.~\ref{fig:scenarios}}.

\begin{Lmm} \label{lmm:invertibleMprime}
For a quantum Bob with access to modes $B$ and $F_0$, and a unitary map $\mathcal{E}_T$, the in-principle achievable asymptotic key rates $r_a$ and $r_b$, with one-way {\em direct reconciliation}, corresponding, respectively, to the setups in Figs.~\ref{fig:scenarios}(a) and (b), satisfy
\begin{align}
r_b \leq r_a \label{to-prove}.
\end{align}
\end{Lmm}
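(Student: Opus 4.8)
The plan is to exploit two facts: that the Devetak--Winter direct-reconciliation rate splits into a Bob-dependent term and an Eve-dependent term, and that a unitary $\mathcal{E}_T$ \emph{both} of whose outputs $B$ and $F_0$ are handed to Bob can be inverted by Bob at no cost. First I would write the in-principle achievable asymptotic rate in each scenario as
\begin{align}
r_{a/b} = \max_{M_B} I(X:Y) - \chi(X:E),
\end{align}
where the maximization runs over Bob's measurement (Bob is quantum) and $\chi(X:E)$ is the Holevo information between Alice's classical variable $X$ and Eve's retained quantum system $E$. The first term is precisely the accessible information $I_{\rm acc}(X;Q)$ of Bob's quantum system $Q$.

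The central observation is that the two settings of Fig.~\ref{fig:scenarios} differ \emph{only} in the bypass branch: scenario (a) sends the bypass mode through $\mathcal{E}'$ acting on $\ket{\psi_F}$, whereas scenario (b) replaces it by $\mathcal{E}_V$, whose output is the fixed vacuum $\ket{0}$. Because Eve's attack $\mathcal{E}$ is held fixed and, by hypothesis, receives exactly the same input (the $\eta_{\rm AE}$-fraction of mode $B$ together with its beam-splitter vacuum ancilla and the pure ancilla $\ket{\psi_E}$) in both cases, the joint classical--quantum state of $(X,E)$ is identical in (a) and (b). Hence $\chi(X:E)$ is common to both rates, and establishing $r_b \le r_a$ reduces to comparing the Bob-side accessible informations.

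Next I would use that $\mathcal{E}_T$ is unitary and that Bob holds both its outputs. Applying $\mathcal{E}_T^{-1}$, Bob recovers the two modes entering the telescope: the signal arriving from Eve through the second beam splitter, and the bypass-channel output. The former is identical in (a) and (b) by the argument above; the latter is an $X$-dependent state $Q_2$ in (a) and the fixed $\ket{0}$ in (b). Thus, up to the reversible operation $\mathcal{E}_T^{-1}$, Bob's system is $Q_1 \otimes \ket{0}$ in (b) and $Q_1 \otimes Q_2$ in (a), with $Q_1$ the common Eve-to-Bob mode. Since adjoining a fixed product state leaves accessible information unchanged, and since accessible information is monotone under enlarging the measured system (any POVM on $Q_1$ extends to $Q_1 \otimes Q_2$ by tensoring with the identity on $Q_2$, reproducing the same statistics), I obtain $I_{\rm acc}^{(b)}(X;Q_1) \le I_{\rm acc}^{(a)}(X;Q_1 Q_2)$. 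Combined with equality of the Holevo terms this yields $r_b \le r_a$.

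The main point to handle carefully is establishing that the Eve-to-Bob mode $Q_1$ and the retained system $E$ are genuinely unchanged between the two scenarios; this is exactly what makes the hypothesis ``Eve's attack is fixed'' operative, and it is what confines the bypass to entering only as an extra subsystem for Bob. Once that is pinned down, the remainder is a standard data-processing statement: in scenario (a) Bob can always invert $\mathcal{E}_T$, discard the recovered bypass mode $Q_2$, and reproduce verbatim the optimal strategy and statistics of scenario (b), so his optimum can only improve. I expect the only delicate part to be the bookkeeping of the beam-splitter vacuum ancillas and making precise the invertibility of $\mathcal{E}_T$ on the combined (Eve-output, bypass-output) space, rather than any substantive inequality.
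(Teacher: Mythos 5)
Your proposal is correct and follows essentially the same route as the paper: both arguments fix Eve's attack, observe that the joint $X$--$E$ state (hence Eve's term) is identical in the two scenarios, and then use unitarity of $\mathcal{E}_T$ plus data processing on Bob's side --- the paper phrases this as the CPTP map $\mathcal{R}_{BF_0}=\mathcal{E}_T\circ\mathcal{E}_V\circ\mathcal{E}_T^{-1}$ acting on the conditioning system of the Devetak--Winter conditional entropy, while you phrase it as Bob inverting the telescope and discarding the recovered bypass mode. The only imprecision is that the in-principle achievable asymptotic rate for a quantum Bob has Bob's term equal to the Holevo quantity $\chi(X:BF_0)$ (attained with collective measurements across rounds), not the single-copy accessible information $\max_{M_B} I(X:Y)$; your argument goes through verbatim with $\chi$ in place of $I_{\rm acc}$, since $\chi$ is likewise unchanged by adjoining a fixed vacuum state and monotone under discarding subsystems.
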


The proof is given in Appendix~\ref{app:lemma1proof}. {The proof of Lemma~1 hinges on the fact that the scenario in Fig.~\ref{fig:scenarios}(b) can be recovered from Fig.~\ref{fig:scenarios}(a) by applying an additional map on Bob's systems---effectively, the extra map $\mathcal{E}_V$ that maps everything to the vacuum. Such a map does not affect Eve's uncertainty about Alice's $X$ outcomes while it increases Bob's uncertainty, by possibly increasing the quantum bit error rate (QBER) in a QKD experiment. This implies that, under conditions of Lemma~1, the in-principle achievable key rate in Fig.~\ref{fig:scenarios}(b) should not be higher than that of Fig.~\ref{fig:scenarios}(a).}

{The result of Lemma~1, however, holds for a quantum Bob under fixed attack by Eve performed in the two scenarios of Fig.~\ref{fig:scenarios} and might not be of use when evaluating the secret key rate produced in a given QKD experiment. As a matter of fact, in a QKD experiment, what we are interested in is a bound on the leaked information to Eve conditioned on the set of observations made in the corresponding QKD experiment, in either configurations in Fig.~\ref{fig:scenarios}. Considering that scenario (b) is equal to scenario (a) except for possibly an additional noise-increasing map, by fixing the observed amount of noise, we may conclude that the required attack by Eve can be less powerful in (b) than in (a), such that the resulting noise is effectively the same in the two scenarios. A less powerful attack could amount to less information leaked to Eve, hence a higher secret key rate in the case of Fig.~\ref{fig:scenarios} (b). This leads us to the opposite conclusion from what we draw in Lemma~1, namely, that, in the P\&M QKD setting, the secret key rate in Fig.~\ref{fig:scenarios} (a) cannot be larger than that of  Fig.~\ref{fig:scenarios} (b). An alternative way to look at this problem is that, from Alice and Bob's point of view, they have to find the worst case attack in the space spanned by valid choices of $\{\mathcal{E}, \mathcal{E}'\}$, for Fig.~\ref{fig:scenarios} (a), and in the space of $\{\mathcal{E}, \mathcal{E}_V\}$ for Fig.~\ref{fig:scenarios} (b). The latter turns out to be a subset of the former, which implies that Eve might come up with a more effective attack in the setup of  Fig.~\ref{fig:scenarios} (a).  We formalize this argument in the following theorem, which rigorously proves the above insight in the finite-key scenario and for both direct and reverse reconciliation cases.}

\begin{Thm}
\label{Thm:DR-QKD}
Consider an $\varepsilon$-secure QKD protocol, with one-way direct (or reverse) information reconciliation and $\varepsilon=2\bar{\varepsilon}+ \varepsilon_{\rm EC}+\varepsilon_{\rm PA}$, where $\varepsilon_{\rm EC}$ and $\varepsilon_{\rm PA}$ are, respectively, the security parameters for the error correction and privacy amplification steps. Let $n$ be the number of signals used for key generation and $\{Q^{\rm obs}_1,Q^{\rm obs}_2,\dots\}$ be the observed parameters by Alice and Bob in the parameter-estimation rounds. Then, the achievable secret key rates $R_a$ and $R_b$ of scenarios (a) and (b) in Fig.~\ref{fig:scenarios}, respectively, obtained with the above protocol in the finite-key regime satisfy:
\begin{align}
R_a \leq R_b \label{theorem1-result}.
\end{align}
\end{Thm}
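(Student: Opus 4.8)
The plan is to recast the certified key rate in each scenario as a worst-case (minimization) problem over the eavesdropping strategies compatible with the recorded statistics, and then to exhibit the feasible set of scenario (b) as a subset of that of scenario (a). First I would fix the finite-key figure of merit. For a one-way protocol with the stated security decomposition $\varepsilon = 2\bar\varepsilon + \varepsilon_{\rm EC} + \varepsilon_{\rm PA}$, the extractable key length admits a leftover-hashing bound of the schematic form
\[
\ell \;=\; H_{\min}^{\bar\varepsilon}\!\left(S^{n}\,\middle|\,E\right)_{\rho} \;-\; \mathrm{leak}_{\rm EC} \;-\; \Delta(\varepsilon_{\rm PA}),
\]
with $S=X$ for direct and $S=Y$ for reverse reconciliation, $E$ denoting Eve's (restricted) side information, $\mathrm{leak}_{\rm EC}$ the information revealed during error correction, and $\Delta(\varepsilon_{\rm PA})$ a fixed finite-key correction. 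The point I would stress is that $\ell$ is a functional of the classical--quantum state $\rho_{S^{n}E}$ generated by the protocol run, and that, once the observed parameters $\{Q^{\rm obs}_i\}$ are fixed, both $\mathrm{leak}_{\rm EC}$ and the $\varepsilon$-bookkeeping are pinned down; the only quantity that still depends on the attack is the conditional smooth min-entropy.

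Next I would make the worst-case structure explicit. Because the bypass map $\mathcal{E}'$ is inaccessible to Eve yet uncharacterized by Alice and Bob, a sound analysis must treat it pessimistically, so the certified rate is $R=\tfrac1N\min\,\ell$, with $N$ the total number of signals and the minimum taken over all strategies reproducing $\{Q^{\rm obs}_i\}$. For scenario (a) this minimization ranges over pairs $(\mathcal{E},\mathcal{E}')$, defining a feasible set $\mathcal{F}_a$, whereas for scenario (b) the bypass is clamped to $\mathcal{E}_V$ and the minimization ranges over $\mathcal{E}$ alone, defining $\mathcal{F}_b$. The crux is the containment $\{(\mathcal{E},\mathcal{E}_V):\mathcal{E}\in\mathcal{F}_b\}\subseteq\mathcal{F}_a$: since $\mathcal{E}_V$ is a legitimate CPTP map, with Kraus form given in Eq.~\eqref{Kraus-rep}, the pair $(\mathcal{E},\mathcal{E}_V)$ is an admissible strategy in (a); and since scenario (b) run with $\mathcal{E}$ is physically identical to scenario (a) run with $(\mathcal{E},\mathcal{E}_V)$, it produces the same observations and the same state $\rho_{S^{n}E}$, hence lies in $\mathcal{F}_a$ and carries the same value of $\ell$.

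The conclusion then follows from monotonicity of the minimum under inclusion: minimizing $\ell$ over the larger set $\mathcal{F}_a$ cannot exceed minimizing it over the embedded copy of $\mathcal{F}_b$, so
\[
R_a \;=\; \frac1N\min_{\mathcal{F}_a}\ell \;\le\; \frac1N\min_{\mathcal{F}_b}\ell \;=\; R_b .
\]
The same chain applies verbatim to reverse reconciliation, since only the identity of the raw-key register $S$ changes, not the feasible sets or the embedding.

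I expect the main obstacle to be conceptual rather than computational. The delicate points are (i) arguing that the operationally meaningful rate, conditioned on observations, is indeed the worst case over the uncharacterized bypass, so that $\mathcal{E}'$ enters the optimization on the same footing as Eve's channel $\mathcal{E}$; and (ii) verifying that the key-length functional depends only on $\rho_{S^{n}E}$ --- with $E$ Eve's restricted system and with the bypass environment and the beam-splitter losses traced out consistently in both scenarios --- so that the embedding $\mathcal{E}\mapsto(\mathcal{E},\mathcal{E}_V)$ truly preserves $\ell$. This is exactly the subtlety that reverses the inequality relative to Lemma~\ref{lmm:invertibleMprime}: there the channel $\mathcal{E}$ was held fixed and $F_0$ was granted to Bob, whereas here both relaxations are dropped and it is the set containment $\mathcal{F}_b\subseteq\mathcal{F}_a$, rather than a fixed-state comparison, that forces $R_a\le R_b$.
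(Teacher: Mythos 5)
Your proposal is correct and follows essentially the same route as the paper's own proof: writing the certified finite-key rate as a minimization of the smooth min-entropy over attack strategies compatible with the observed statistics, embedding scenario (b) into scenario (a) via the pair $(\mathcal{E},\mathcal{E}_V)$, and invoking monotonicity of the minimum under set inclusion ($\mathcal{T}\times\{\mathcal{E}_V\}\subseteq\mathcal{S}$ in the paper's notation), with the identical remark that reverse reconciliation only swaps the raw-key register. The contrast you draw with Lemma~\ref{lmm:invertibleMprime} (fixed attack and access to $F_0$ versus worst-case optimization) matches the paper's own discussion.
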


\begin{proof}
{The claim directly follows from the definitions of achievable secret key rate for scenario (a) and (b) in the finite-key regime. To see this, let us first consider the state $\rho_{X^n Y^n E}$ representing the raw keys of Alice and Bob, together with Eve's quantum side information. For simplicity, we assume that Bob assigns a random outcome in the case of no detection in a key generation round. A similar proof would hold in the case where Alice and Bob apply a sifting map to their outcomes in order to discard the rounds where Bob had no detection. Let us denote the initial state of all subsystems, before any map is applied, by $\rho$ given by:
\begin{align}
\rho:= & \ketbra{\psi_{AB}}{\psi_{AB}}^{\otimes n}\otimes \ketbra{0}{0}_{F_0}^{\otimes n} \otimes \ketbra{\psi_F}{\psi_F} \\ \notag 
& \otimes \ketbra{\psi_E}{\psi_E} \otimes \ketbra{0}{0}^{\otimes n}_{F_2}.
\end{align}
Then, for scenario (a), we have
\begin{align}
\begin{split}
    \rho^{(\mathcal{E},\mathcal{E}')}_{X^n Y^n E}=\Tr_{F_0 F_1 F_2}  \Bigl[
    & M_B \circ \mathcal{E}_T\circ\mathcal{B}_{\eta_{\rm EB}}\circ\mathcal{E}'  \\
    & \circ \mathcal{E}\circ\mathcal{B}_{\eta_{\rm AE}}\circ M_A (\rho) \Bigr], 
\end{split}
\label{rhoXYEa}
\end{align}
and for scenario (b),
\begin{align}
\begin{split}
    \rho^{(\mathcal{E})}_{X^n Y^n E}=\Tr_{F_0 F_1 F_2} \Bigl[
    & M_B \circ \mathcal{E}_T\circ\mathcal{B}_{\eta_{\rm EB}}\circ\mathcal{E}_V \\
    & \circ \mathcal{E}\circ\mathcal{B}_{\eta_{\rm AE}}\circ M_A (\rho) \Bigr], 
\end{split}
\label{rhoXYEb}
\end{align}
where we denote the maps of the two beam splitters by $\mathcal{B}_{\eta_{\rm AE}}$ and $\mathcal{B}_{\eta_{\rm EB}}$ and discarded the map $\mathcal{E}'$ in \eqref{rhoXYEb} since it would have no effect on the state. Then, the state in \cref{rhoXYEb} can be obtained from \cref{rhoXYEa} by replacing $\mathcal{E}'$ with $\mathcal{E}_V$, that is, $\rho^{(\mathcal{E})}_{X^n Y^n E}=\rho^{(\mathcal{E},\mathcal{E}_V)}_{X^n Y^n E}$ .}

{For scenario (a), the achievable secret key rate obtained from the $n$ detected key-generation rounds, in the case of direct reconciliation, is given by \cite{ScaraniRennerPRL}:
\begin{align}
\begin{split}
R_a = \frac{1}{n} \biggl[
& \min_{(\mathcal{E},\mathcal{E}')\in\mathcal{S}(\{Q^{\rm obs}_1,Q^{\rm obs}_2,\dots\},\bar{\varepsilon})} H^{\bar{\varepsilon}}_{\mathrm{min}}   (X^n|E)_{\rho^{(\mathcal{E},\mathcal{E}')}} \\
& - I_\mathrm{EC} - \log_2 \frac{2}{\varepsilon_{\rm EC}} -2\log_2 \frac{1}{2\varepsilon_{\rm PA}} \biggr], 
\end{split}
\label{finitekeya}
\end{align}
where the minimization is performed over all possible attacks by Eve, $\mathcal{E}$, and all possible actions of the bypass channel,  $\mathcal{E}'$, compatible with the observed parameters, while $I_\mathrm{EC}$ is the amount of error-correction information publicly revealed by Alice and $H^{\bar{\varepsilon}}_{\mathrm{min}}$ is $\bar{\varepsilon}$-smooth min entropy function. More specifically, the set $\mathcal{S}(\{Q^{\rm obs}_1,Q^{\rm obs}_2,\dots\},\bar{\varepsilon})$ contains all pairs of maps $(\mathcal{E},\mathcal{E}')$ such that the parameters $\{Q^n_1,Q^n_2,\dots\}$, computed from the resulting state $\rho^{(\mathcal{E},\mathcal{E}')}_{X^n Y^n}$ in \cref{rhoXYEa}, are close to the observed parameter values $\{Q^{\rm obs}_1,Q^{\rm obs}_2,\dots\}$, except for a small probability fixed by $\bar{\varepsilon}$.}

{Similarly, for scenario (b), the achievable secret key rate is given by:
\begin{align}
\begin{split}
R_b = \frac{1}{n} \biggl[
& \min_{\mathcal{E}\in\mathcal{T}(\{Q^{\rm obs}_1,Q^{\rm obs}_2,\dots\},\bar{\varepsilon})} H^{\bar{\varepsilon}}_{\mathrm{min}}(X^n|E)_{\rho^{(\mathcal{E})}} \\
& - I_\mathrm{EC} - \log_2 \frac{2}{\varepsilon_{\rm EC}} -2\log_2 \frac{1}{2\varepsilon_{\rm PA}} \biggr], 
\end{split}
\label{finitekeyb}
\end{align}
where in this case the set $\mathcal{T}(\{Q^{\rm obs}_1,Q^{\rm obs}_2,\dots\},\bar{\varepsilon})$ contains all possible maps $\mathcal{E}$ such that the parameters $\{Q^n_1,Q^n_2,\dots\}$, computed from the resulting state $\rho^{(\mathcal{E})}_{X^n Y^n}$ in \eqref{rhoXYEb}, are close to the observed values $\{Q^{\rm obs}_1,Q^{\rm obs}_2,\dots\}$, except for a small probability fixed by $\bar{\varepsilon}$.}

For a fixed set of values $\{Q^{\rm obs}_1,Q^{\rm obs}_2,\dots,\bar{\varepsilon}\}$, Eqs.~\eqref{finitekeya} and \eqref{finitekeyb} are identical expect for their smooth min-entropy terms. {Moreover, we observe that the minimization set in \eqref{finitekeyb} is a subset of the minimization set in \eqref{finitekeya}. In particular, the smooth min entropy term in \eqref{finitekeyb} is calculated for $\rho^{(\mathcal{E})}_{X^n Y^n}=\rho^{(\mathcal{E},\mathcal{E}_V)}_{X^n Y^n}$, which is a subset of all the states $\rho^{(\mathcal{E},\mathcal{E}')}_{X^n Y^n}$ that are considered in \cref{finitekeya}. In other words, we have: $\mathcal{T}\times \{\mathcal{E}_V\}\subseteq \mathcal{S}$. We, therefore, conclude that the minimization in \cref{finitekeya} can only produce a smaller or equal rate than the minimization in \cref{finitekeyb}, thus proving the claim that $R_a \leq R_b$.}

{Note that the same proof can straightforwardly be extended to the reverse reconciliation case, by replacing Alice's raw key $X^n$ with Bob's raw key $Y^n$ in the smooth min-entropy terms. We again observe that, by minimizing the achievable key rate over the uncharacterized maps of the setups in Fig.~\ref{fig:scenarios}, namely, $\mathcal{E}$ and $\mathcal{E}'$ in (a) and $\mathcal{E}$ in (b), scenario (b) can be seen as a particular case of scenario (a). Thus, the optimal key rate in (a) should be smaller than or equal to the optimal key rate in (b). However, this also suggests that a partial characterization of the map $\mathcal{E}'$ in the bypass channel would prevent us from viewing (b) as a particular case of (a), leading to a potentially different relation between the key rates $R_a$ and $R_b$.}
\end{proof}

Theorem \ref{Thm:DR-QKD} provides an easy way to obtain upper bounds on the key rate in the generic setup of Fig.~\ref{fig:scenarios}(a), which includes a bypass channel, using existing techniques and bounds for the setup of Fig.~\ref{fig:scenarios}(b), which includes extended Alice and Bob boxes. While this is an important result, in QKD, we are often interested in lower bounds on the key rate, by which we can specify the required amount of privacy amplification in a real experiment. In the following sections, we will further study the relationship between such lower and upper bounds in the case of certain CV and DV-QKD protocols. In particular, we numerically check in the case of CV-QKD how the two bounds are close to, or deviate from, each other in certain practical scenarios. In the case of DV-QKD, we also use the photon-number nature of the channel in certain BB84 protocols to come up with customized lower bounds in the setups with a bypass channel.

{An alternative way to lower bound the min-entropy term in \cref{finitekeya}, in the direct reconciliation case, is to calculate $H^{\bar\varepsilon}_{\mathrm{min}}(X^n|B')$, where, in Figs.~\ref{fig:scenarios}(a) and (b), $B'$ represents mode $B$ right after the first beam splitter, which is in the state given by $\rho_{B'}= \Tr_{AF_0}[\mathcal{B}_{\eta_{\rm AE}}(\ketbra{\psi_{AB}}{\psi_{AB}}^{\otimes n} \otimes \ketbra{0}{0}_{F_0}^{\otimes n} )]$. To prove this, consider that the min-entropy in \cref{finitekeya} is computed on the state in \cref{rhoXYEa} where the system $Y^n$ is traced out. This allows us to simplify some of the quantum maps in the state in \cref{rhoXYEa} since they have no effect once the systems on which they act are traced out. We thus have that the min-entropy term in \cref{finitekeya} is computed on the following state:
\begin{align}
\begin{split}
\rho_{X^n E}= \Tr_{F_0  B} \Bigl[
& \mathcal{E}\circ\mathcal{B}_{\eta_{\rm AE}}\circ M_A (\ketbra{\psi_{AB}}{\psi_{AB}}^{\otimes n} \\
& \otimes \ketbra{0}{0}_{F_0}^{\otimes n} \otimes \ketbra{\psi_E}{\psi_E}) \Bigr], 
\end{split}
\label{rhoXEa}
\end{align}
Then, we can use the strong subadditivity of the smooth min-entropy function \cite{Tomamichel-book} to obtain the following lower bound:
\begin{align}
H^{\bar\varepsilon}_{\mathrm{min}}(X^n|E) &\geq H^{\bar\varepsilon}_{\mathrm{min}}(X^n|BE) , \label{ineq1}
\end{align}
where the entropy on the right hand side is computed on the state:
\begin{align}
\begin{split}
    \rho_{X^n B E}=\Tr_{F_0 }  \Bigl[
    & \mathcal{E}\circ\mathcal{B}_{\eta_{\rm AE}}\circ M_A (\ketbra{\psi_{AB}}{\psi_{AB}}^{\otimes n}  \\
    & \otimes \ketbra{0}{0}_{F_0}^{\otimes n} \otimes \ketbra{\psi_E}{\psi_E}) \Bigr]. 
\end{split}
\label{rhoXBEa}
\end{align}
By using the data-processing inequality \cite{Tomamichel-book}, the entropy can be further bounded as follows:
 \begin{align}
    H^{\bar\varepsilon}_{\mathrm{min}}(X^n|BE) \geq H^{\bar\varepsilon}_{\mathrm{min}}(X^n|B'E), \label{ineq2}
 \end{align}
where the entropy on the right hand side is now computed on the state without eavesdropper's map $\mathcal{E}$, i.e.,
\begin{align}
    \rho_{X^n B' E}= & \Tr_{F_0 } \Bigl[\mathcal{B}_{\eta_{\rm AE}}\circ M_A (\ketbra{\psi_{AB}}{\psi_{AB}}^{\otimes n}\otimes \ketbra{0}{0}_{F_0}^{\otimes n})\Bigr] \notag \\
    & \otimes \ketbra{\psi_E}{\psi_E}\label{rhoXBprimeEa}.
\end{align}

Because system $E$ is separate from all other systems in \cref{rhoXBprimeEa}, it follows that its contribution to the conditional entropy vanishes, i.e. $H^{\bar\varepsilon}_{\mathrm{min}}(X^n|B'E)=  H^{\bar\varepsilon}_{\mathrm{min}}(X^n|B')$. By combining this with \cref{ineq1} and \cref{ineq2}, we then obtain
\begin{align}
    H^{\bar\varepsilon}_{\mathrm{min}}(X^n|E) &\geq H^{\bar\varepsilon}_{\mathrm{min}}(X^n|B')
    \label{eq:obv-bound},
\end{align}
which proves our claim. Note that, in certain regimes of operation, \cref{eq:obv-bound} would allow us to obtain an effective lower bound on the key rate, as we will see in the CV-QKD section.}

\section{CV-QKD with restricted Eve} 
\label{cv-qkd}
Here, we focus on continuous-variable QKD protocols, in which data is encoded on the quadratures of light. We consider a particular protocol in the family of GG02 protocols \cite{Grosshans_GG02_PRL,Grosshans_GG02_Nature}, in which Alice uses Gaussian encoding, and Bob performs homodyne detection. CV-QKD is not an obvious choice when it comes to highly lossy channels \cite{Pirandola-MDI} such as the ones we may face in the satellite-based QKD scenario. But, for that very reason, it is a particularly interesting case to study because, in our setting, the initially trusted loss $\eta_{\rm AE}$ could alleviate some of the problems that CV-QKD faces in high-loss channels. Note that, by using the fading nature of the atmospheric part of the link \cite{Ruppert_2019}, {along with relevant binning or clustering techniques,} it might also be possible to find working regimes of operation for satellite-based CV-QKD \cite{SatCVQKD_npjQI,Derkach_2021, MalaneyQE}. 
In this work, however, we only focus on the benefits we may reap by imposing access restrictions on Eve, particularly, at the transmitter end, by assuming $\eta_{\rm AE} \leq 1$ while $\eta_{\rm EB} = 1$. For the same reason, we only focus on the asymptotic case, which also makes the analysis a bit easier to follow.

To be able to obtain concrete results, for the most of this section, we study a special case of the setup in \cref{fig:scenarios}(a), which we expect to encounter in practice. A schematic diagram of this case is given in \cref{fig:cv_telmodel}, in which the bypass channel is modelled as a pure loss channel with transmissivity $\eta_{\rm S}$. This is a reasonable assumption considering scenarios we may face in practice. Alternatively, a thermal-loss channel could have been assumed for the bypass channel, but as we will see later the insights we obtain into the effects of the bypass channel on the performance would not majorly change. The second assumption is in modelling the telescope action as a coupling beam splitter with transmissivity $\eta_{\rm T}$. As we will show in Appendix \ref{App:telescope}, this is partly the result of the mode definitions in \cref{fig:scenarios}(a), and partly because of the light collecting nature of a telescope. Finally, whenever Eve's action needs to be explicitly modelled, we assume Eve is implementing an entangling cloner attack. This would implicitly imply that the channel controlled by Eve is of thermal-loss nature. This may not be necessarily the case, especially in our setting where the bypass channel can offer other pathways to the receiver. But, again, it is what we may expect to be the case in a realistic scenario, and it also considerably reduces the search space when we look for worst-case configurations.

\begin{figure}[t]
\includegraphics[width=0.5\textwidth-15pt]{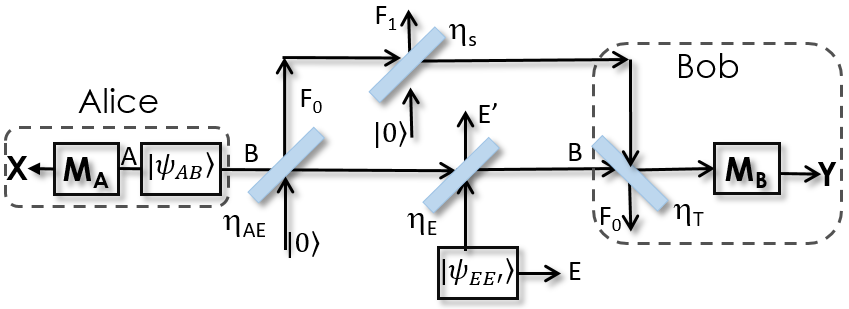}
\caption{A special setting for the setup of \cref{fig:scenarios}(a), where the bypass channel and the telescope actions are modeled by beam splitters. Eve's attack has also been modelled using an entangling cloner assuming that the Eve controlled section of the channel is also lossy.} 
\label{fig:cv_telmodel}
\end{figure}

The key rate of a CV-QKD protocol, in the asymptotic limit of infinitely many signals, in the direct reconciliation (DR) and reverse reconciliation (RR) cases, are, respectively, given by 
\begin{align}
\label{rate_DR}
K_{\rm DR} &=\beta I_{AB} - \chi_{AE},  \\
K_{\rm RR} &=\beta I_{AB} - \chi_{BE},
\label{rate_RR}
\end{align}
where $\beta$ is the reconciliation efficiency, $I_{AB}$ is the mutual information between Alice and Bob, and $\chi_{AE}$ ($\chi_{BE}$) is the Holevo information between Alice (Bob) and Eve. Under optimal collective Gaussian attacks \cite{Garcia-Patron_OptGaussAttacks,Navascues_OptGaussAttacks,Pirandola:ColGausAttacks_PRL2008}, the mutual information and Holevo information terms can both be bounded by using the covariance matrix (CM) of Alice, Bob, and Eve in the equivalent entanglement based picture of the protocol. In the unrestricted Eve scenario, it can be assumed that Eve holds a purification of Alice and Bob joint states. This enables us to calculate all relevant terms just as a function of the CM of Alice and Bob, which can directly be measured in the experiment. In the restricted Eve scenario, however, this purification assumption does not hold as there are other modes, such as $F_0$, $F_1$, and $F_2$ in \cref{fig:scenarios}, that are not accessible to any of the parties. This would require us to redo some of the calculations in the simulation cases we consider in this section.

Throughout this section, we assume that the measured CM by Alice and Bob implies a channel with a {total equivalent} excess noise, at the transmitter end, $\xi$, and a total transmissivity $T_{\rm eq} = \eta_{\rm ch} \eta_d$, where $\eta_{\rm d}$ is the receiver efficiency, corresponding to the measurement operator $M_B$, which is a trusted source of loss that can be characterized by the users{, and $\eta_{\rm ch}$, representing the channel transmissivity, is defined as the ratio between the two observed parameters $T_{\rm eq}$ and $\eta_d$}. {Note that in the asymptotic case considered in our analysis, the observed values for $T_{\rm eq}$ and $\xi$ effectively represent the corresponding average values for, respectively, transmissivity and excess noise, over the entire set of exchanged quantum states. This does not imply or require that the channel parameters need to be fixated throughout the experiment. In fact, in the satellite-to-ground channels, the turbulence effect can indeed result in a fading channel with a time-dependent gain. But, our security proof only relies on the average values derived from our observations, based on which the amount of information leaked to Eve can be bounded. Considering that, in practice, such an overall effect resembles a lossy channel, for simulation purposes, we only consider scenarios where $T_{\rm eq}\leq \eta_{\rm ch} \leq 1$.} We also assume that the mutual information term, $I_{AB}$, which is an observable in the experiment, is given by 
\begin{align}
\label{mutualinfo_CVQKD}
I_{AB}= \frac{1}{2} \log_2 \frac{V+ \chi_{\rm tot}}{1+ \chi_{\rm tot}},
\end{align}
corresponding to a thermal-loss channel identified by $T_{\rm eq}$ and $\xi$. In \cref{mutualinfo_CVQKD}, $V$ is the variance of the two-mode squeezed vacuum (TMSV) state at the source (in the entanglement-based picture), and $\chi_{\rm tot}=\chi_{\rm line}+\frac{\chi_{\rm Hom}}{\eta_{\rm ch}}$ is the total noise, calculated at the transmitter end, where $\chi_{\rm line}= \frac{1-\eta_{\rm ch}}{\eta_{\rm ch}} + \xi$ and $\chi_{\rm Hom}= \frac{1-\eta_{\rm d}}{\eta_{\rm d}} + \frac{\nu_{\rm el}}{\eta_{\rm d}} $ are, respectively, the noise terms due to the channel and the homodyne receiver. Here, $\nu_{\rm el}$ denotes the receiver's electronic noise.

In the following, we obtain a lower bound on the secret key generation rate under above assumptions for the setup in \cref{fig:cv_telmodel} in RR and DR cases, and compare it with the corresponding upper bounds that can be obtained from \cref{Thm:DR-QKD}.

\subsection{Reverse Reconciliation}
\label{sec:RR}
Reverse reconciliation is typically the default choice for CV-QKD systems in highly lossy channels. We first consider this case under the restricted Eve scenario of $\eta_{\rm AE} \leq 1$ while $\eta_{\rm EB} = 1$ in \cref{fig:cv_telmodel}. The key question we would like to explore is how the achievable key rate in the setup with a bypass channel compares with the upper bound that can be obtained from the setup of \cref{fig:scenarios}(b). Interestingly, we find that, under the assumptions outlined above, the two are numerically very close to each other in certain practical regimes of interest.

Let us first explain the limitations we have considered in the special setup shown in \cref{fig:cv_telmodel}. 
Given that this is a linear channel, and our encoding is Gaussian, a Gaussian attack is expected to be the optimal collective attack by Eve. In principle, for any given values of $\eta_{\rm AE}$, $\eta_{\rm S}$ and $\eta_{\rm T}$, there could be a Gaussian attack by Eve that is compatible with {the observed values for total transmissivity $T_{\rm eq}$ and the total equivalent excess noise $\xi$ at the transmitter end}. The Gaussian operation by Eve could take different forms. Here, we only focus on one particular form of attack, which can be modelled by the conventional entangling cloner setup as shown in \cref{fig:cv_telmodel}. Here, Eve combines a TMSV state with variance $V_{\rm E}$, at a beam splitter with transmissivity $\eta_{\rm E}$, with the signal she receives from Alice. The implicit assumption here is that Eve's channel is lossy corresponding to the condition that $\eta_{\rm E} \leq 1$. The conclusions we draw in this section will then only be valid for this type of attack.

In Appendix~\ref{app:CVsetup}, we have calculated the corresponding CM for all parties in \cref{fig:cv_telmodel}, from which the expected values for 
our key observables, $T_{\rm eq}$ and $\xi$ are obtained and, respectively, given by \cref{eq:Teq} and \cref{eq:Xieq}. In the following, { in order to focus on the impact of the restrictions imposed on Eve}, we assume that the receiver has no loss, i.e., $\eta_{\rm d} = 1$, and no electronic noise, i.e., $\nu_{\rm el} = 0$. For any given values of $\eta_{\rm AE}$, $\eta_{\rm S}$ and $\eta_{\rm T}$, we can then find the corresponding values for $\eta_{\rm E}$ and $V_{E}$ that are compatible with observed values of $T_{\rm eq}$ and $\xi$. For the sake of our simulation, we assume that the resulting $\eta_{\rm E}$ is less than or equal to one, to be compatible with the entangling cloner attack considered here.

In order to calculate the key rate for the setup of \cref{fig:cv_telmodel}, we use the CM given in \cref{CM:ABE}, from which all relevant terms can be calculated. $I_{AB}$ is already given by \cref{mutualinfo_CVQKD}. To calculate the Holevo information term, we have
\begin{align}
\chi_{BE}= H(EE') - H(EE'|B), 
\end{align}
where $H(EE')$ and $H(EE'|B)$ can, respectively, be obtained from the corresponding symplectic eigenvalues of the CM for $EE'$ and $EE'|B$; see \cref{fig:cv_telmodel} for notations. The former, $\textbf{V}_{EE'}$, is specified by tracing out modes $A$ and $B$ in the CM of Eq.~\eqref{CM:ABE}. We then numerically find its symplectic eigenvalues, which we denote by $\Lambda_1$ and $\Lambda_2$. The latter CM, $ \textbf{V}_{EE'|B}$, can also be obtained by applying a homodyne measurement on mode $B$: 
\begin{align}
\textbf{V}_{EE'|B}= \textbf{V}_{EE'} - \frac{1}{V_B}\Sigma_{BEE'} \Pi \Sigma_{BEE'}^T, 
\end{align}
where $\Sigma_{BEE'}^T=\left[ C_{BE} {\mathbb Z} ~~~ C_{BE'}\mathbbm{1} \right]$ and $\Pi = {\rm diag}(1,0)$, {with $\mathbb Z={\rm diag}\{1,-1\}$ and $\mathbbm{1}$ being the identity matrix of dimension two} \cite{Weedbrook:GaussQI2012}. In the above, $V_B$, $C_{BE}$, and $C_{BE'}$ are defined in \cref{eq:CM-param}. Denoting the symplectic eigenvalues of $\textbf{V}_{EE'|B}$ by $\Lambda_3$ and $\Lambda_4$, the Holevo information term in the RR case is given by
\begin{align}
\chi_{BE}=g(\Lambda_1)+g(\Lambda_2)-g(\Lambda_3)-g(\Lambda_4), 
\end{align}
where $g(x)= (\frac{x+1}{2}) \log_2 (\frac{x+1}{2}) - (\frac{x-1}{2}) \log_2 (\frac{x-1}{2})$. Note that, in the above calculations, we account for the fact that the state corresponding to $ABEE'$ is not a pure state. This prevents us from calculating all the terms from the CM of $A$ and $B$, as it is common in the unrestricted case.

Let us now fix the observed values for $T_{\rm eq}$ and $\xi_{\rm eq}$ and compare the achievable secret key rates in \cref{fig:cv_telmodel} with the corresponding scenario where the bypass channel is removed, or, equivalently, when $\eta_{\rm S} = 0$. In both cases, some optimization needs to be done to find the lower bound on the key rate. In \cref{fig:cv_telmodel}, while the telescope is part of Bob's secure station, it is not clear how this parameter can be characterised. For any key rate analysis, one should then consider the space of feasible values of $\eta_{\rm S}$ and $\eta_{\rm T}$ and go with the worst case possible. In \cref{fig:cv_telmodel}, this corresponds to going over all possible values of $\eta_{\rm S}$ and $\eta_{\rm T}$ that are compatible with $T_{\rm eq}$ and $\xi_{\rm eq}$, and then find $K_{\rm RR}^{\rm (a)} \equiv \min_{\eta_{\rm S},\eta_{\rm T}} \{K_{\rm RR}\}$. Similarly, for the extended Alice model, we can set $\eta_{\rm S}=0$, and optimize over $\eta_{\rm T}$. For a fixed loss in the link, the higher $\eta_{\rm T}$, the more control is given to Eve. The minimum guaranteed key rate in this case is then given by $K_{\rm RR}^{\rm (b)} \equiv K_{\rm RR}(\eta_{\rm S}=0, \eta_{\rm T}=1)$. We can then compare $K_{\rm RR}^{\rm (a)}$ with $K_{\rm RR}^{\rm (b)}$.  

\begin{figure}[tb]
\includegraphics[width=0.5\textwidth-15pt]{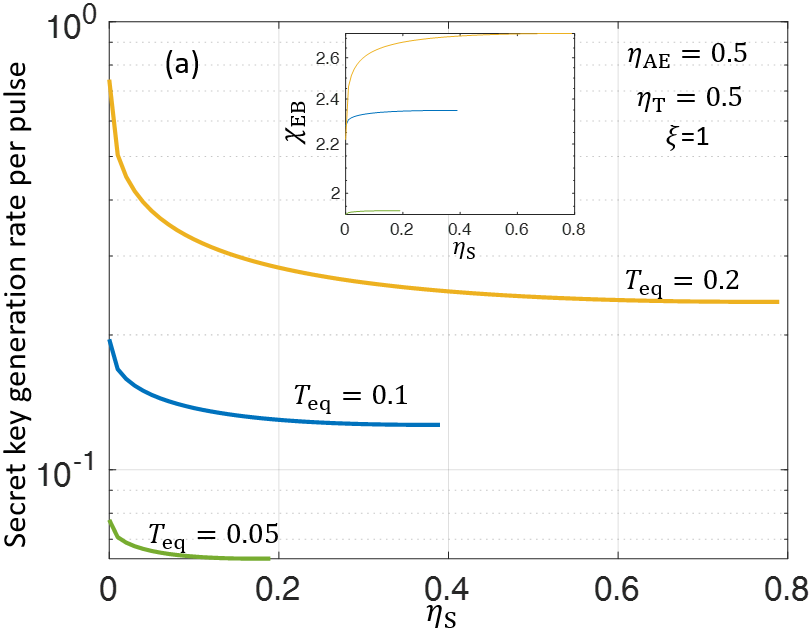}
\includegraphics[width=0.5\textwidth-15pt]{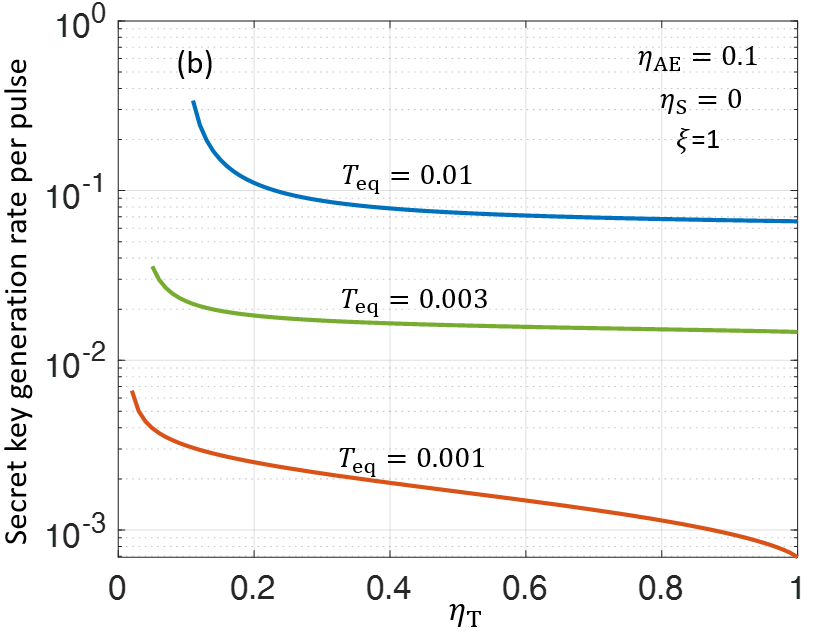}
\caption{Secret key generation rate, for the CV-QKD protocol in \cref{fig:cv_telmodel} with reverse reconciliation, versus (a) $\eta_{\rm S}$ and (b) $\eta_{\rm T}$ { for different values of observed transmissivity and a fixed value of excess noise}. In (a), the bound on the leaked information to Eve, $\chi_{EB}$, is also shown in the inset. In both figures, $V=300$ in SNU, $\beta=1$, $\eta_{\rm d} = 1$, and $\nu_{\rm el} = 0$. Other parameters are specified on the plot.} 
\label{fig:cv_telmodelR}
\end{figure}

In order to get some insight into our optimization problem, in \cref{fig:cv_telmodelR}, we have plotted $K_{\rm RR}$ versus each of $\eta_{\rm S}$ and $\eta_{\rm T}$, while keeping the other parameter constant. { To mainly focus on the impact of the channel parameters in \cref{fig:cv_telmodel}, we have assumed $\beta =1$, which results in optimal $V$ to be very large. We have fixed $V$ at 300 in SNU, which gives us close to optimum key rate values.} In \cref{fig:cv_telmodelR}(a), $\eta_{\rm T}$ and $\eta_{\rm AE}$ are fixed at 0.5, while, for different values of $T_{\rm eq}$, we look at how $K_{\rm RR}$ varies versus $\eta_{\rm S}$. We observe a decreasing behavior for the key rate within the acceptable range of values for $\eta_{\rm S}$. Note that, within the assumptions in our model, e.g. that $0 \leq \eta_{\rm E} \leq 1$, such a range becomes narrower with decrease in $T_{\rm eq}$. This is because in \cref{eq:Teq}, the maximum value for $\eta_{\rm S}$ is given by
{ $\eta_{\rm S}^{\rm max} = T_{\rm eq}/[(1-\eta_{\rm AE})(1-\eta_{\rm T})]$} at $\eta_{\rm E} = 0$, i.e., when Alice's signal reaches Bob only via the bypass channel. Interestingly, at such a point, the key rate is minimum, while $\chi_{EB}$, shown in the inset, is maximum. A justification for this behavior is that, at $\eta_{\rm E} = 0$, Eve can keep the entirety of the signal she has received from Alice for herself, and use it to obtain information about Bob's key. In fact, in this scenario, the bypass channel helps Eve with masquerading the transmissivity of the channel without requiring her to give up any information she can extract from her share of Alice's signal. This observation also explains why the scenario with no bypass channels offers an upper bound on the key rate. In the latter case, i.e., when $\eta_{\rm S} = 0$, we see a similar behavior with regard to the optimum value of $\eta_{\rm E}$ from Eve's perspective. As shown in \cref{fig:cv_telmodelR}(b), in this case, the key rate goes down with increase in $\eta_{\rm T} \geq T_{\rm eq}$. The larger $\eta_{T}$, the smaller will be $\eta_{\rm E} = T_{\rm eq}/\eta_{\rm T}$, meaning that Eve has more control on the channel. This observation agrees with our earlier definition of $K_{\rm RR}^{\rm (b)}$.

Putting together the points made above, it may seem that the gap between $K_{\rm RR}^{\rm (a)}$ and $K_{\rm RR}^{\rm (b)}$ could be large in certain regimes of operation. In \cref{fig:cv_telmodelR}(a), it is, however, interesting to see that the difference between the maximum value of $K_{\rm RR}$ at $\eta_{\rm S} = 0$, and its minimum value, obtained at $\eta_{\rm S}^{\rm max}$, shrinks down as $T_{\rm eq}$ decreases. This would give us the hope that, in practical regimes of operation for satellite QKD with a total loss of 30-40 dB, the difference between $K_{\rm RR}^{\rm (a)}$ and $K_{\rm RR}^{\rm (b)}$ could be reasonably low. This has been verified, as a function of $\eta_{\rm AE}$, in \cref{fig:cv_ExtAlice}(a) at $T_{\rm eq} = 0.001$ for different values of excess noise. As can be seen, $K_{\rm RR}^{\rm (a)}$ and $K_{\rm RR}^{\rm (b)}$ almost overlap in the entire region with the exception of when $\eta_{\rm AE} \ll 1$. Numerically speaking, the optimum value for $K_{\rm RR}^{\rm (a)}$ is often obtained at $\eta_{\rm S} = 1$, which effectively maximises $\eta_{\rm T}$ and minimises $\eta_{\rm E}$. The latter two favour Eve, while the former makes the bypass channel a reliable replacement for what Eve should have done in the absence of the bypass channel. This also suggests that, while our model in \cref{fig:cv_telmodel} is just a special case of what could happen in reality, a no-loss, and possibly no-noise, bypass channel, as we are dealing with in the case of $K_{\rm RR}^{\rm (a)}$, could be the worst case scenario for Alice and Bob. We have briefly examined this hypothesis by considering a thermal-loss bypass channel, and observed the following:
\begin{itemize}
\item The key change in the CM elements is for the excess noise expression in \cref{eq:Xieq}, which now gets an additional term $(1-\eta_{\rm S})(1-\eta_{\rm T})(V_{\rm S} - 1)$, due to the bypass channel, where $V_{\rm S}$ is the variance of the TMSV state that models thermal noise in the bypass channel.
\item At $\eta_{\rm S} < 1$ and $V_{\rm S} > 1$, we see an increase in the key rate as compared to the case of $V_{\rm S} = 1$, corresponding to no thermal noise in the bypass channel. 
\item The minimum key rate is, however, still obtained at $\eta_{\rm S} = 1$, in which case the effect of additional term in the excess noise vanishes, and we will obtain the same result for $K_{\rm RR}^{\rm (a)}$ as the pure-loss bypass channel.
\end{itemize}
We should note that we still limit our search space to the feasibility assumptions we have made in \cref{fig:cv_telmodel}. While the above claim needs to be analytically verified, based on our numerical results, in practical regimes of operation for satellite-QKD, it seems safe to use the upper bound given by \cref{Thm:DR-QKD} as a reliable approximate to the lower bound on the key rate for CV-QKD systems with reverse reconciliation. 

Another reassuring result in \cref{fig:cv_ExtAlice}(a) is that the achievable key rate is a decreasing function of $\eta_{\rm AE}$, that is, the more restriction we set on Eve, the higher key rate Alice and Bob can securely achieve. The impact in certain cases can be quite instrumental. For instance, at a total equivalent excess noise of $\xi = 0.1$ at the transmitter end, while no key can be exchanged under unrestricted Eve, positive key rates can be obtained for $\eta_{\rm AE} < 0.9$. The same happens for $\xi = 1$, but with higher restrictions on Eve at $\eta_{\rm AE} < 0.1$. Interestingly, when $\eta_{\rm AE}$ is sufficiently low, the key rate will become almost independent of the amount of excess noise, and rather large key rates can be obtained.

\begin{figure}[t]
\includegraphics[width=0.5\textwidth-15pt]{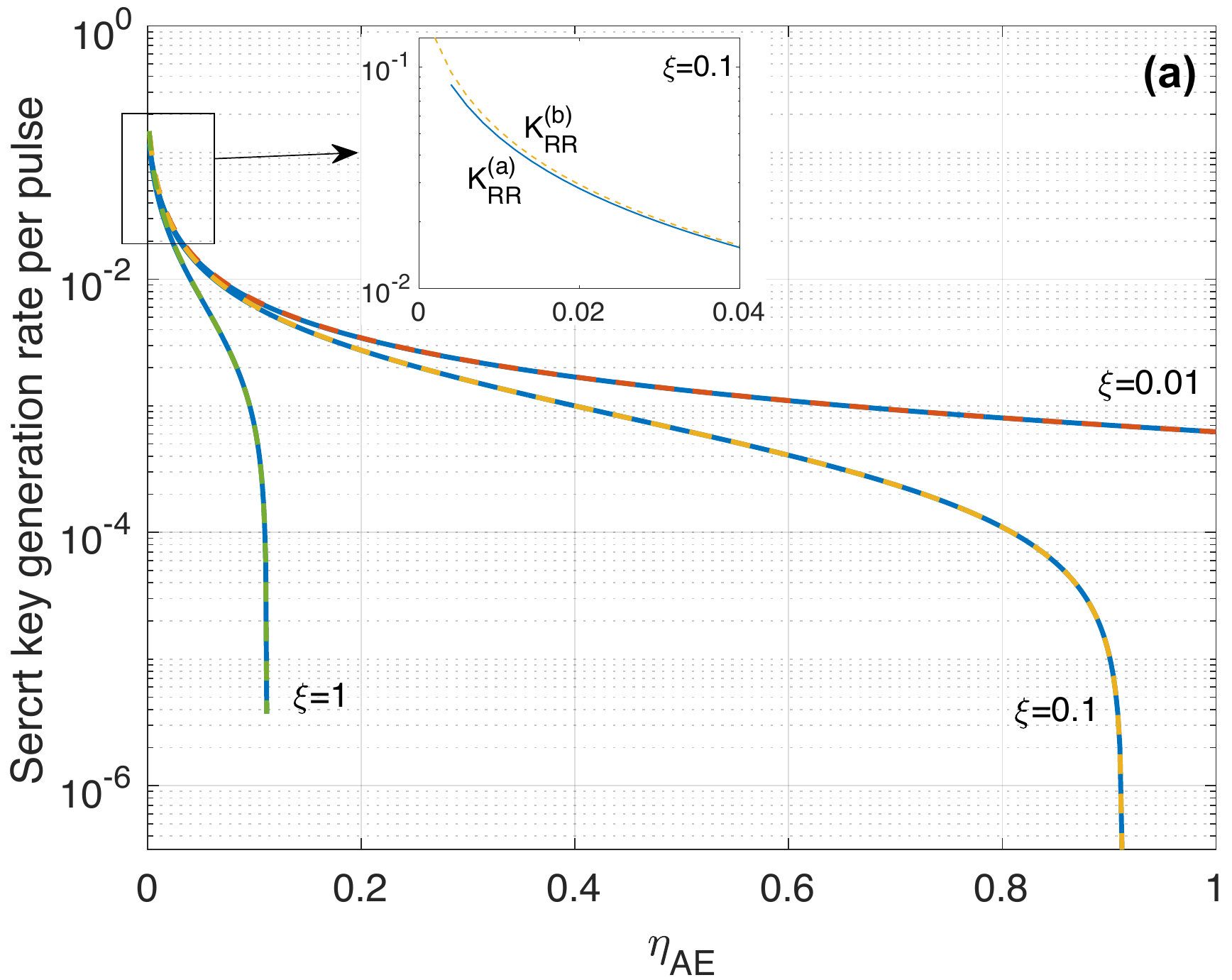}
\includegraphics[width=0.5\textwidth-15pt]{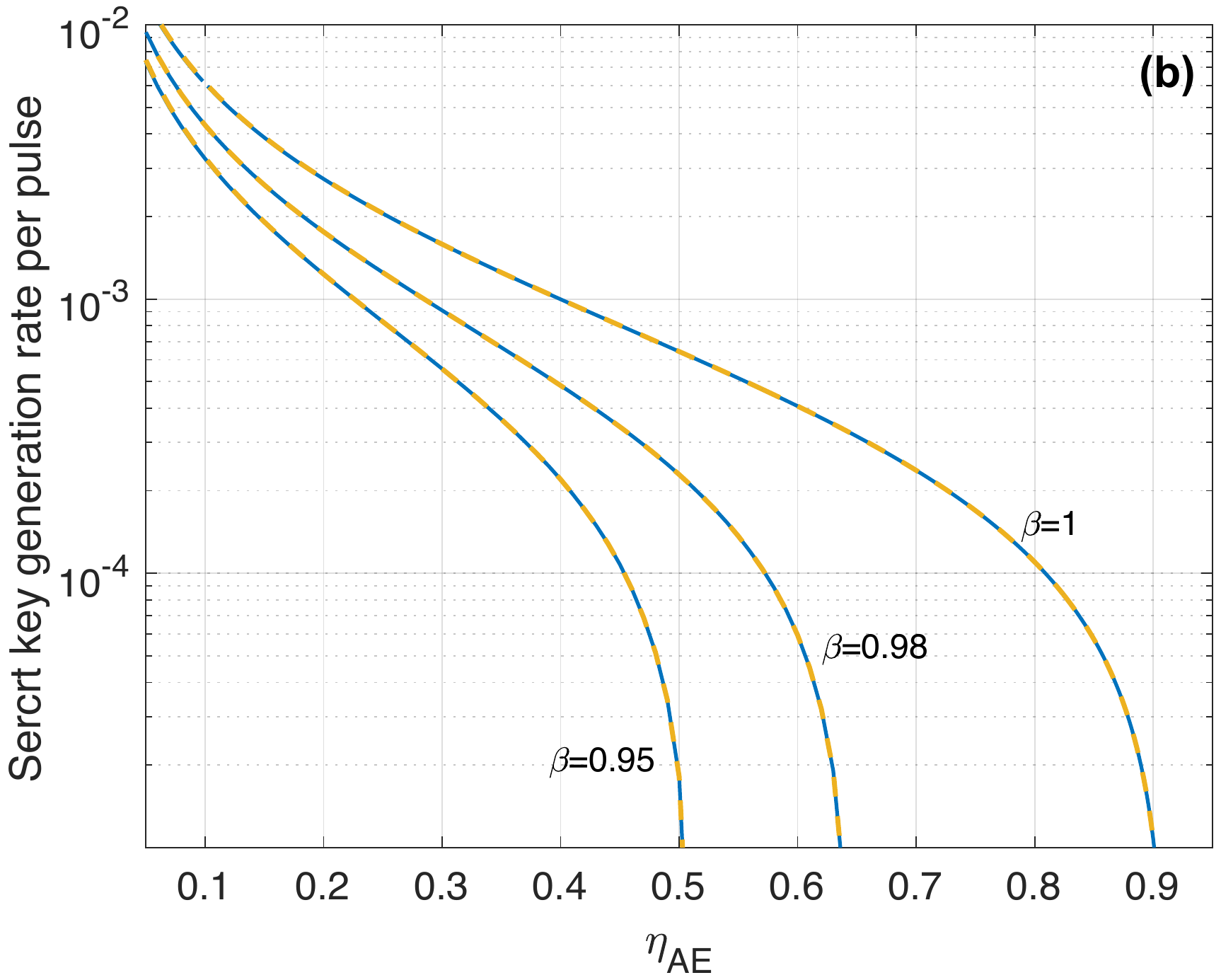}
\caption{Secret key generation rate, for the CV-QKD protocol in \cref{fig:cv_telmodel} with reverse reconciliation, versus $\eta_{\rm AE}$. The solid lines represent $K_{\rm RR}^{\rm (a)}$ and the dashed lines represent $K_{\rm RR}^{\rm (b)}$. As shown in the inset, the two curves are very close to each other, and mostly overlap except for small values of $\eta_{\rm AE}$. In {\bf (a)}, $T_{\rm eq} = 0.001$, $V=300$ in SNU, $\beta=1$, $\eta_{\rm d} = 1$,  $\nu_{\rm el} = 0$, and the excess noise is shown on the graphs. 
{ In {\bf (b)}, we consider imperfect reconciliation efficiencies characterized by parameter $\beta$. Other parameters are $T_{\rm eq} = 0.001$, $\xi=0.1$, and $V=3.5$ in SNU.}
} 
\label{fig:cv_ExtAlice}
\end{figure}

{ The overall results explained above seem to be unchanging when we account for other sources of imperfection in our system. In particular, in \cref{fig:cv_ExtAlice}(b), we have accounted for non-ideal values for the reconciliation efficiency parameter $\beta$. It can be seen that the overlap between the upper and lower bounds on the key rate still holds when $\beta < 1$, and that the key rate goes down as $\eta_{\rm AE}$ increases. The difference is that the threshold value for $\eta_{\rm AE}$ to give us positive key rates goes down as we decrease $\beta$. This is understandable because, by reducing the mutual information term by a factor of $\beta$, we now need further restrictions on Eve to bring down the Holevo information term in \cref{rate_RR}. The transition to positive key rates happen at around 0.5 for $\eta_{\rm AE}$  at $\beta =0.95$, which is still an attainable value.}

\subsection{CV-QKD with Direct Reconciliation}
In the previous section, we saw how the proposed restrictions on Eve can improve the key rate of CV-QKD systems in highly lossy channels. Here, we apply the results of \cref{Sec:SecProof} to the case of CV-QKD with DR under a restricted Eve. In the DR case, with no restriction on Eve, the maximum loss that we can tolerate is only 3~dB. It would be interesting to see how that would change when we impose restrictions on Eve's access to Alice's signal. In the following, we consider two extremes: when $\eta_{\rm AE} > T_{\rm eq}$, in which case, the entangling cloner attack as in \cref{fig:cv_telmodel} is the optimal attack by Eve, and when $\eta_{\rm AE} < T_{\rm eq}$, where we can use \cref{eq:obv-bound} to directly find a lower bound on the key rate.

\subsubsection{Method 1: Entangling Cloner Attack}
Here, we assume that $T_{\rm eq} < \eta_{\rm AE}$, and use the results of Appendix \ref{app:CVsetup} to calculate the key rate for the setup of \cref{fig:cv_telmodel}. As in the RR case, we optimize the key rate over uncharacterized system parameters $\eta_{\rm S}$ and $\eta_{\rm T}$ as follows:
\begin{eqnarray}
&K_{\rm DR}^{\rm (a)} \equiv \min_{\eta_{\rm S},\eta_{\rm T}} \{K_{\rm DR}(\eta_{\rm S},\eta_{\rm T})\} & \nonumber \\
&K_{\rm DR}^{\rm (b)} \equiv \min_{\eta_{\rm T}} \{K_{\rm DR}(0,\eta_{\rm T})\}= K_{\rm DR}(0,1),&
\end{eqnarray}
where $K_{\rm DR}$ is defined in \cref{rate_DR}, with 
\begin{align}
\label{eq:DR1}
\chi_{AE}= H(EE') - H(EE'|A_x),
\end{align}
where $A_x$ represents the homodyne measurement result on one of the quadratures of mode $A$ after going through the 50:50 beam splitter in the heterodyne measurement $M_A$. The above entropy terms can be calculated using the CM in \cref{CM:ABE} with some modifications due to the 50:50 beam splitter in $M_A$. The joint CM for modes $A_x E E'$ is then given by 
\begin{align}
\textbf{V}_{A_xEE'}= \left(\begin{array}{ccc}
\label{CM:AEE}
(V+1)/2 \mathbbm{1} &   0 \mathbbm{1}  & C_{AE'}/\sqrt{2} {\mathbb Z}   \\
0  \mathbbm{1}  &   V_E \mathbbm{1} & C_{EE'}  {\mathbb Z}   \\
C_{AE'}/\sqrt{2} {\mathbb Z}  &  C_{EE'} {\mathbb Z}  & V_{E'}    \mathbbm{1} 
\end{array}\right),
\end{align}
where $\mathbb Z={\rm diag}\{1,-1\}$, $\mathbbm{1}$ is the identity matrix of dimension two, and all other parameters are given by \cref{eq:CM-param}.
Eve's state $\rho_{EE'}$ is then described by the CM $\textbf{V}_{EE'}$, which is given by the $4\times4$ submatrix in the lower right of $\textbf{V}_{A_xEE'}$ given in Eq.~\eqref{CM:AEE}. We then have
\begin{align}
H(EE') = g(\Lambda_1) + g(\Lambda_2), 
\end{align}
where $\Lambda_1$ and $\Lambda_2$ are the symplectic eigenvalues of $\textbf{V}_{EE'}$. Similarly, the conditional term $H(EE'|A_x) = g(\Lambda_3) + g(\Lambda_4)$,  where $\Lambda_3$ and $\Lambda_4$ are the symplectic eigenvalues of $\textbf{V}_{EE'|A_x}$, given by 
\begin{align}
\textbf{V}_{EE'|A_x}= \textbf{V}_{EE'} - \frac{2}{V+1}\Sigma_{A_xEE'} \Pi \Sigma_{A_xEE'}^T, 
\end{align}
where we have applied a homodyne measurement on mode $A_x$ \cite{Weedbrook:GaussQI2012} and 
$\Sigma_{A_xEE'}=\left(\begin{array}{cc}
0  \mathbbm{1}   \\
C_{AE'}/\sqrt{2}{\mathbb Z}
\end{array}\right)$.

\subsubsection{Method 2: Generic Lower Bound}
In Method 2, we use \cref{eq:obv-bound}, which basically uses the state before Eve's operation, to bound $\chi_{AE}$. The advantage of this technique is that here we do not need to impose any conditions on the observed values of $\eta_{\rm ch}$ and $\eta_{\rm AE}$. In particular, we can now cover the case of $\eta_{\rm AE}<\eta_{\rm ch}$, which is the extreme case where Eve's collection efficiency is worse than Bob, for instance, as in \cref{fig:setup}(c). In this case, we use \cref{eq:obv-bound} to upper bound $\chi_{AE}$ by 
\begin{align}
\chi_{AB'}= H(B') - H(B'|A_x),
\end{align}
where $B'$ is mode $B$ right after the first beam splitter in \cref{fig:scenarios}(a). Note that, in this approach, we do not need to restrict ourselves to the assumptions in \cref{fig:cv_telmodel}. 
In the above equation, $H(B')$ is the von Neumann entropy of the thermal state $B'$ with variance $V_{B'}=\eta_{\rm AE}V+1-\eta_{\rm AE}$. We then use the fact that the symplectic eigenvalue of a single-mode thermal state is indeed equal to its variance to obtain $H(B') = g(V_{B'})$. Similarly, to calculate the term $H({B'|A_x})$, we need to find the symplectic eigenvalues for the conditional covariance matrix $\textbf{V}_{B'|A_x}$. Given that the CM of $A_xB'$ is given by
\begin{align}
\textbf{V}_{A_xB'}= \left(\begin{array}{cc}
(V+1)/2 \mathbbm{1}     &  \sqrt{\eta_{\rm AE}(V^2-1)/2} {\mathbb Z} \\
\sqrt{\eta_{\rm AE}(V^2-1)/2} {\mathbb Z} &  V_{B'} \mathbbm{1}    \\
\end{array}\right), 
\end{align}
we have, after the homodyne detection on $A_x$,
\begin{align}
\textbf{V}_{B'|A_x}&=  V_{B'} \mathbbm{1} - \frac{\eta_{\rm AE}(V^2-1)}{V+1} {\mathbb Z}\Pi{\mathbb Z}^T \nonumber \\
& = \left(\begin{array}{cc}
1    &  0 \\
0 &  V_{B'}    \\
\end{array}\right) .
\end{align}
An upper bound on $\chi_{AE}$ can then be calculated from the following 
\begin{align}
\label{eq:DR-M2}
\chi_{AE}\leq g(V_{B'})-  g(\sqrt{V_{B'}}). 
\end{align}

\subsubsection{Numerical Results}
Figure~\ref{fig:rateDR_M1M2}(a) shows the key rate versus $\eta_{\rm AE}$, for $\eta_{\rm AE} > T_{\rm eq}$, using Method 1 for different values of $T_{\rm eq}>0.5$. We have plotted the upper bound $K_{\rm DR}^{(b)}$ (dashed lines) as well as the optimised lower bound $K_{\rm DR}^{(a)}$ (solid lines). Unlike the RR case, in the DR scenario, the two bounds are not close and effectively we cannot guarantee higher key rates than what we can obtain in the unrestricted case. In particular, for $T_{\rm eq}<0.5$, similar to the unrestricted case, we do not get a positive key rate for $K_{\rm DR}^{(a)}$. The optimum value of $K_{\rm DR}^{(a)}$ is again numerically obtained at $\eta_{\rm S} = 1$, but this time optimum $\eta_{\rm E}$ takes rather large nonzero values around 0.5. The larger $\eta_{\rm AE}$ is, the larger $\eta_{\rm E}$ we get at the optimum point. This could be because, at $\eta_{\rm AE}$ close to one, the main path through Eve should offer a transmissivity close to 0.5, or higher, to get positive key rates, whereas, as $\eta_{\rm AE}$ goes down, the bypass channel helps Eve more with the total observed $T_{\rm eq}$ to the extent that the initial restriction on Eve becomes irrelevant.

\begin{figure}[t]
\includegraphics[width=0.5\textwidth-15pt]{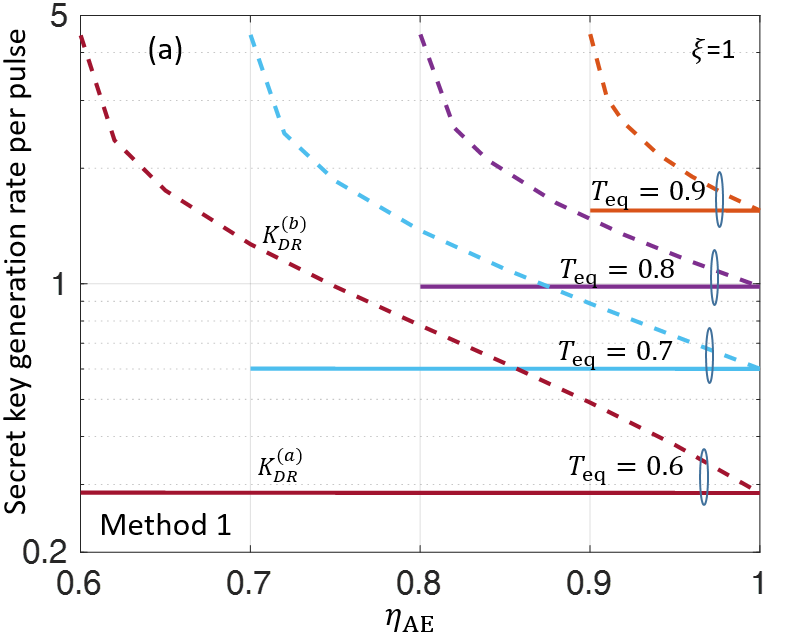}
\includegraphics[width=0.5\textwidth-15pt]{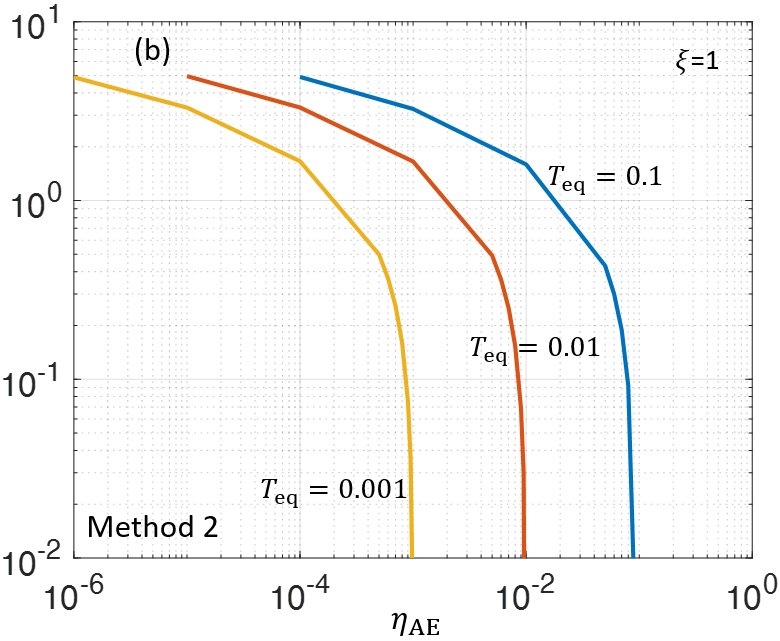}
\caption{Secret key generation rate, at $V= 10^7$ in shot-noise unit (SNU), obtained from (a) Method 1 and (b) Method 2, versus $\eta_{\rm AE}$ for CV-QKD systems using direct reconciliation. The results are shown for an observed channel with different values of $T_{\rm eq}$, $\xi=1$ SNU, $\eta_{\rm d} = 1$, $\nu_{\rm el} = 0$, and $\beta=1$. In (a), the solid (dashed) lines represent $K_{\rm DR}^{(a)}$ ($K_{\rm DR}^{(b)}$).} 	\label{fig:rateDR_M1M2}
\end{figure}

We can however get some advantage in the restricted case in the extreme case of $\eta_{\rm AE} < T_{\rm eq}$. Here, we can use the generic lower bound in \cref{eq:DR-M2} to obtain the key rate. The results are shown in \cref{fig:rateDR_M1M2}(b). As can be seen, in this case, the key rate can improve by orders of magnitude by decreasing $\eta_{\rm AE}$. The seemingly flat curves at the left-hand-side of the graph are mainly because of the choice of a finite value for $V$. In principle, the key rate would continue going up in the asymptotic limit of $V \rightarrow \infty$. However, the growth happens very slowly, e.g., for a variance as large as $V=10^{20}$, the key rate is only about 25. Considering the limitations on the transmitted power, a maximum $V$ can be chosen in practice to offer the maximum key rate in such settings where Eve is disadvantaged as compared to Bob, as in the case of the wiretap channel.

\section{Discrete-variable protocols with restricted Eve}
\label{dv-qkd}

In this section, we consider several DV-QKD protocols, mainly focusing on the BB84 protocol \cite{bennett2014quantum}, and its variants. We consider the original BB84 with single-photon sources (SPSs) as well as its variant with phase randomized weak coherent pulses (WCPs) \cite{gottesman2004security}. In all these cases we deal with a photon-number channel from Eve's perspective. We assume $\eta_{\rm EB}=1$, i.e., we only consider Eve's restriction on her signal collection capabilities. The case of $\eta_{\rm EB} < 1$ will be the subject of another investigation. In the following, we present a method to obtain a lower bound for the secret key rate in the restricted Eve case. In this paper, we only consider the asymptotic regime where infinitely many signals are exchanged and focus on how restrictions on Eve can affect system performance.

\subsection{General Lower Bounds for Secret Key Rate}
\label{sec:DV-Methods}
The secret key rate of BB84 protocols, in the asymptotic regime, in an unrestricted Eve scenario is lower bounded by \cite{scarani2009security}
\begin{align}
R \geq q Q \left[ -f h(E)+\frac{Q_1}{Q}(1-h(e_1))+\frac{Q_0}{Q}\right],
\label{general_formula_DV}
\end{align}
where $f$ is the error correction inefficiency, $q$ is the basis reconciliation factor, and $h(\cdot)$ represents Shannon's binary entropy function defined as
\begin{align}
h(x)=-x\log_2x-(1-x) \log_2 (1-x).
\end{align}
In Eq.~(\ref{general_formula_DV}), $Q$, $E$, and $e_1$, respectively, denote the total gain, QBER, and single-photon error rate. The parameters $Q_0$ and $Q_1$ are given by
\begin{eqnarray}
Q_0=Y_0 p_0,\nonumber\\
Q_1=Y_1 p_1,
\end{eqnarray}
where $Y_i$ is the probability of Bob's detection under the condition that Alice has sent $i$ photons, and $p_i$ denotes the probability that Alice sends $i$ photons. 

The general idea behind \cref{general_formula_DV} is that, in photon-number channels, the information gained by Eve depends on the number of photons in the signal received by Eve. For the events in which Eve receives two or more photons, one may assume that Eve can obtain full information about the transmitted key bit using the photon-number splitting (PNS) attack \cite{brassard2000limitations}. In the events in which Eve receives one photon, the maximum information that she can gain is $h(e_1)$. Finally, if Eve receives no photon, her information is zero, assuming that direct reconciliation is used.

In our restricted Eve scenario, for every sifted bit, we find an upper bound, $I_{\rm E}$, on Eve's information, in the direct reconciliation case, based on the number of photons transmitted by Alice and received by Eve, denoted, respectively, by $n$ and $m$, as follows
\begin{equation}
 I_{\rm E}=\begin{cases} 
  0 & m=0, n\geq 0 \\
  1 & m>1, n\geq m \\
  h(\varepsilon_{11}) & m=1, n=1\\
  1 & m=1, n >1 
\end{cases}
\label{eq:DVcases}
\end{equation} 
where $\varepsilon_{11}$ denotes an upper bound on the error rate of the signals for which $n=m=1$. Here, we have pessimistically assumed that Eve can distinguish between the cases where $m=n=1$ versus $m=1$, but $n>1$. This assumption would allow Eve, in the latter case, to keep the photon to herself and wait to see if one of the remaining photons is received by Bob. To find a lower bound on the secret key rate, we define the parameters $W_{ij}$ and $p_{ij}$ as follows:
\begin{eqnarray}
&W_{ij}=\Pr({\rm Bob's\; detection}|n=i,m=j),& \nonumber \\
&p_{ij}=\Pr(n=i,m=j).&
\end{eqnarray}

In the asymptotic case where Alice sends infinitely many signals, a lower bound on the secret key rate can be obtained by
\begin{align}
R \geq q Q \left[ -f h(E)+\frac{S_{11}}{Q}(1-h(\varepsilon_{11}))+\frac{S_0}{Q}\right],
\label{general_DV_res_b1-no-bound}
\end{align}
where 
\begin{eqnarray}
&S_0=\sum_{i=0}^{\infty}{W_{i0}p_{i0}},&\nonumber\\
&S_{11}=W_{11}p_{11}.&
\label{s_0s_1eq}
\end{eqnarray}
Effectively, the last two terms in \cref{general_DV_res_b1-no-bound} have replaced that of \cref{general_formula_DV}, in the case of no bypass channel, and represent part of the shared key that can be used for privacy amplification. 

In the following, we find bounds on the key parameters in \cref{general_DV_res_b1-no-bound}. In a typical QKD protocol, it may not be possible to measure the exact values of $S_0$, $S_{11}$ and $\varepsilon_{11}$. Instead, we try to find lower bounds on $S_0$ and $S_{11}$, and an upper bound on $\varepsilon_{11}$. To find a lower bound on $S_0$ and $S_{11}$, in the first step we find a lower bound on $S_0+S_{11}$. Note that
\begin{align}
1 \geq Q=\sum_{i=0}^{\infty}{\sum_{j=0}^{i}{W_{ij}p_{ij}}}
=S_0 + S_{11} + S_{\rm other} ,
\label{eq:Q}
\end{align}
where
\begin{align}
S_{\rm other}=\sum_{j=2}^{\infty}{\sum_{i=j}^{\infty}{W_{ij}p_{ij}}}+\sum_{i=2}^{\infty}{W_{i1}p_{i1}} .
\label{S_other}
\end{align}
Using Eqs.~\eqref{s_0s_1eq}--\eqref{S_other}, we can obtain
\begin{align}
S_0+S_{11}= & Q-S_{\rm other} \geq S_{0+11}^{L} \notag \\
\equiv &  Q- \Bigg(\sum_{j=2}^{\infty}{\sum_{i=j}^{\infty}{p_{ij}}}+\sum_{i=2}^{\infty}{p_{i1}}\Bigg),
\label{S_01^L}
\end{align}
where $S_{0+11}^{L}$ denotes the lower bound on $S_0+S_{11}$. Now, we consider the following two inequalities:
\begin{eqnarray}
S_0 \leq \sum_{i=0}^{\infty}{p_{i0}}, \nonumber\\
S_{11} \leq p_{11},
\label{S0_S_11}
\end{eqnarray}
Note that $\sum_{i=0}^{\infty}{p_{i0}}$ is the probability that Eve receives no photon, i.e., $m=0$. We denote this probability by $p^{\rm Eve}_0$. Then, we can write 
\begin{eqnarray}
S_0 \geq S_{0+11}^{L}-S_{11} \geq S_{0+11}^{L}-p_{11},\nonumber\\
S_{11} \geq S_{0+11}^{L}-S_0 \geq S_{0+11}^{L}-p^{\rm Eve}_0.
\label{eq:S0LS1L}
\end{eqnarray}
Substituting Eq.~(\ref{S_01^L}) into the above inequalities, it can be concluded that
\begin{align}
\label{s_0s_1}
S_0 \geq S_0^L \equiv \max \Big\{Q-(1-p^{\rm Eve}_0),0\Big\},\nonumber\\
S_{11} \geq S_{11}^L \equiv \max\Big\{Q-(1-p_{11}),0\Big\}.
    \end{align}
The above bounds have an easy explanation. Let us look at $S_0^L$, for instance. The term $1-p^{\rm Eve}_0$ is the probability that Eve has got a non-vacuum state. This sets an upper bound on the number of detection events that Bob can get because of non-vacuum states. Any other click must come from cases where Eve has received no photons, which gives us the expression in \cref{s_0s_1}.

Note that, in the case of restricted Eve, the bound on $S_0$ is likely to become relevant for small values of $\eta_{\rm AE}$. This is because, for $S_0^L$ to be strictly positive, $1-p^{\rm Eve}_0$ should be smaller than $Q$. In the nominal mode of operation, when no Eve is present, $Q$ often scales with channel transmissivity, and, for coherent state inputs, $1-p^{\rm Eve}_0$ is expected to scale with $\eta_{\rm AE}$. This suggests that as $\eta_{\rm AE}$ becomes smaller and smaller, there could be a non-negligible contribution from the $S_0$ term, which is often ignored in the conventional unrestricted Eve case. In the latter case, $1-p_0$ is often a fixed value, which $Q$ could easily become smaller than in high-loss regimes. Even if $Q$ happens to be larger than $1-p_0$, the contribution from  $S_0$ is likely to be cancelled out by the additional error correction that Alice and Bob need to do for the clicks resulted from the vacuum states sent by Alice. In the restricted Eve scenario, however, the bypass channel can, in principle, provide a route to obtaining correlated data between Alice and Bob without necessarily increasing the QBER. This could allow Alice and Bob to extract more secret key bits from their measured data as compared to the conventional scenario. We will look more carefully at the effect of the above bounds on $S_0$ and $S_{11}$ later in this section.

To find an upper bound on $\varepsilon_1$, we note that
\begin{align}
EQ=\sum_{i=0}^{\infty}{\sum_{j=0}^{i}{\varepsilon_{ij} S_{ij}}},
\end{align}
where $S_{ij}=W_{ij}p_{ij}$. Using the above equation, we can write
\begin{align}
EQ\geq \varepsilon_{11} S_{11} \geq \varepsilon_{11} S_{11}^L \Rightarrow \varepsilon_{11} \leq \frac{EQ}{S_{11}^L}.
\label{error}
\end{align}
Using \cref{s_0s_1} and \cref{error}, the secret key rate, in the restricted Eve case, in the limit of infinitely long key is lower bounded by
\begin{align}
R\geq q Q \left[ -f h(E)+\frac{S_{11}^{L}}{Q}(1-h(\varepsilon_{11}^{U}))+\frac{S_0^{L}}{Q}\right],
\label{general_DV_res_b1}
\end{align}
where $\varepsilon_{11}^{U}=\min\{EQ/S_{11}^L,1/2\}$ gives an upper bound on $h(\varepsilon_{11})$. 

\subsection{BB84 Performance Under Restricted Eavesdropping}
\label{sec:SPS}
In the following, we discuss the secret key rate of BB84 protocols considering different sources. We find the relevant parameters needed in each case to calculate $R$ as given by \cref{general_DV_res_b1}.

\subsubsection{BB84 with Single-Photon Sources}
If an ideal single-photon source is used at Alice side, we have $S_0+S_{11}=Q$, $p^{\rm Eve}_0=1-\eta_{\rm AE}$, and $p_{11}=\eta_{\rm AE}$. Hence, from \cref{s_0s_1}, we have 
\begin{eqnarray}
S^L_0=\max\Big\{Q-\eta_{\rm AE},0\Big\},\nonumber\\
S^L_{11}=\max\big\{Q-(1-\eta_{\rm AE}),0\Big\}.
\label{single_S_0S_1}
\end{eqnarray}
By substituting Eq.~(\ref{single_S_0S_1}) into Eqs.~(\ref{error}) and (\ref{general_DV_res_b1}), we can calculate a lower bound on the secret key rate. 

{ In the case of an ideal single-photon source, there are alternative ways of calculating lower bounds on the key rate by directly using \cref{general_DV_res_b1-no-bound} and \cref{general_formula_DV}. For instance, because $S_0+S_{11}=Q$, \cref{general_DV_res_b1-no-bound} turns into
\begin{align}
R &\geq q Q \left[ -f h(E)+1 - \frac{S_{11}}{Q}h(\varepsilon_{11}) \right] \nonumber \\
&\geq q Q \left[ -f h(E)+1 - h(\varepsilon_{11}^U) \right]
\label{general_DV_res_SPS1}
\end{align}
Alternatively, one can directly use \cref{general_formula_DV} by setting $Q_0 = 0$. In the numerical section, we use the best of these three bounds to specify the lower on the key rate.}

\subsubsection{BB84 with WCP Sources}
\label{sec:WCP}
Phase-randomised WCP (or, in short, WCP) sources follow Poisson distribution in photon generation. If the average number of photons of the WCP source is $\mu$, then $p_{ij}$ can be obtained by
\begin{align}
p_{ij}= & Pr(n=i)Pr(m=j|n=i) \notag \\
= & \frac{e^{-\mu} \mu^i}{i!}\binom{i}{j} \eta_{\rm AE}^j (1-\eta_{\rm AE})^{i-j}
\label{WCP_p}
\end{align}
By substituting the above equation into Eq.~(\ref{S0_S_11}) and Eq.~(\ref{s_0s_1}), we obtain
\begin{eqnarray}
&S_0 \geq S^L_0=\max\Big\{Q-(1-e^{-\mu \eta_{\rm AE}}),0\Big\},&\nonumber\\
&S_{11} \geq S^L_{11}=\max\Big\{Q-(1-\mu \eta_{\rm AE} e^{-\mu}),0\Big\}.&
\label{WCP_S_0_S_1}
\end{eqnarray}
The lower bound $R$ can then be obtained by substituting the above two equations into Eq.~(\ref{error}) and Eq.~(\ref{general_DV_res_b1}).

\subsubsection{Numerical Results}
In this subsection, we consider a satellite-based QKD system, using the BB84 protocol, and evaluate its performance in different regimes of operation. Nominal values used for system parameters are listed in Table~\ref{Tab:Para}. Noteworthy is the fact that we calculate the key rate at a channel transmissivity of $\eta_{\rm ch}=10^{-3}$ corresponding to the recent efficiency measurements for the Micius satellite \cite{Entg-based-SatQKD2020}. We have also assumed the ground station is equipped with superconducting single-photon detectors of 90\% efficiency, but to account for possible background noise in the link \cite{Hugo_RestricedEve_PRApplied}, the dark count probability per pulse for the receiver is assumed to be $p_{\rm dc} = 10^{-7}$. For a system running at 100~MHz, this is one order of magnitude higher than the typical dark counts for such detectors \cite{SupDet93}. We also assume that we use the efficient version of the BB84 protocol \cite{lo2005efficient}, in which the reconciliation factor $q$ approaches one. 

\begin{table}[b]
\caption{Nominal values used for system parameters.}
\centering 
\begin{tabular}{|c |c |} 
    \hline
    Parameter & Value\\
    \hline		
    Average channel loss, $\eta_{\rm ch}$& 30 dB\\
    Error correction inefficiency, $f$& 1.16\\
    Basis reconciliation factor, $q$ & 1\\
    Total dark/background probability, $p_{\rm dc}$ & 1E-7 \\
    misalignment error, $e_d$ & 0.01\\
    quantum efficiency of detectors, $\eta _d$ & 0.9\\
    \hline
\end{tabular}
\label{Tab:Para}
\end{table} 

We consider two types of sources: SPS and WCP for the encoder at Alice side, i.e., the satellite. In a real QKD experiment, the parameters related to the overall gain and the QBER, i.e., $Q$ and $E$ in Eq.~(\ref{general_DV_res_b1}), are obtained by measurement. Here, we assume that the measured values for these parameters are equal to the ones that can be obtained analytically as calculated in Appendix A of Ref.~\cite{Panayi_2014}.

\begin{figure}[t]
\includegraphics[width=0.5\textwidth-15pt]{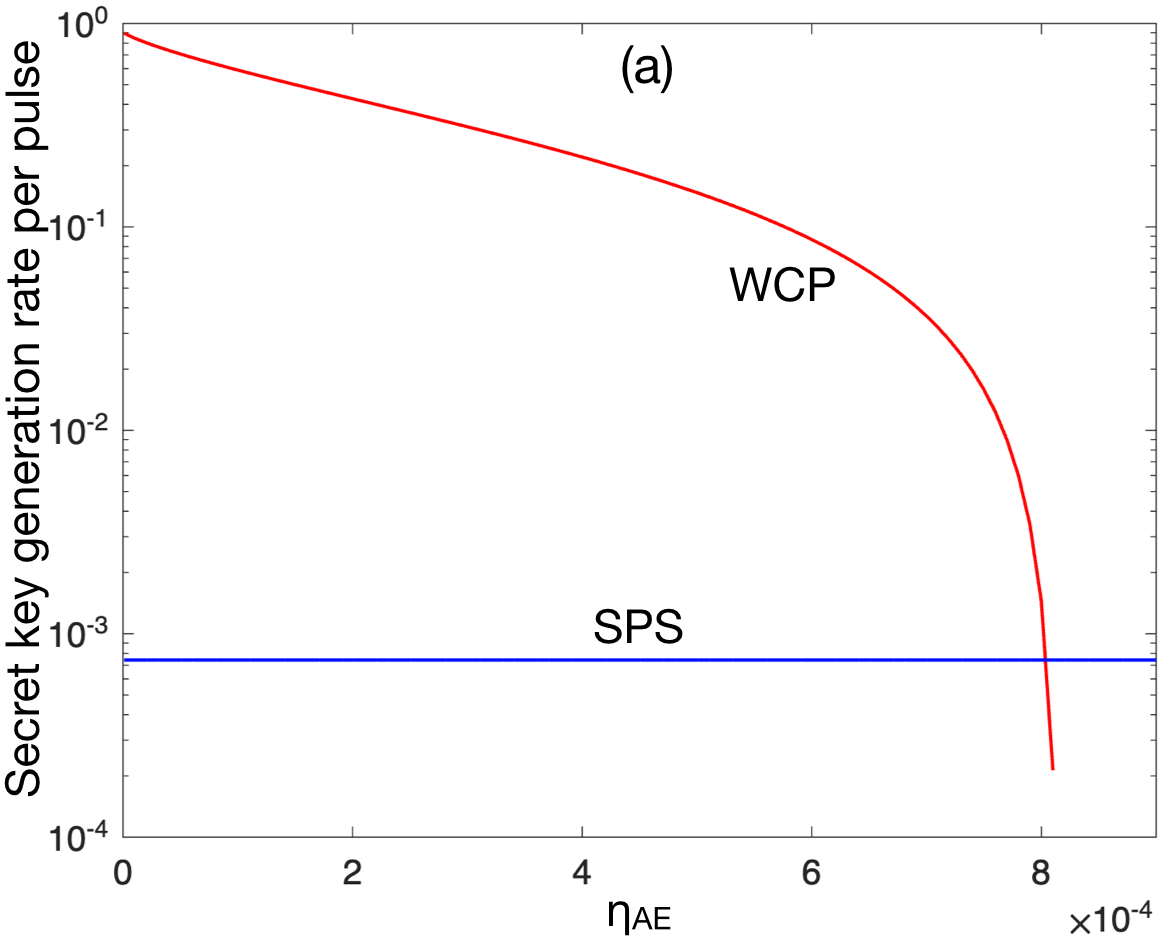}
\includegraphics[width=0.5\textwidth-15pt]{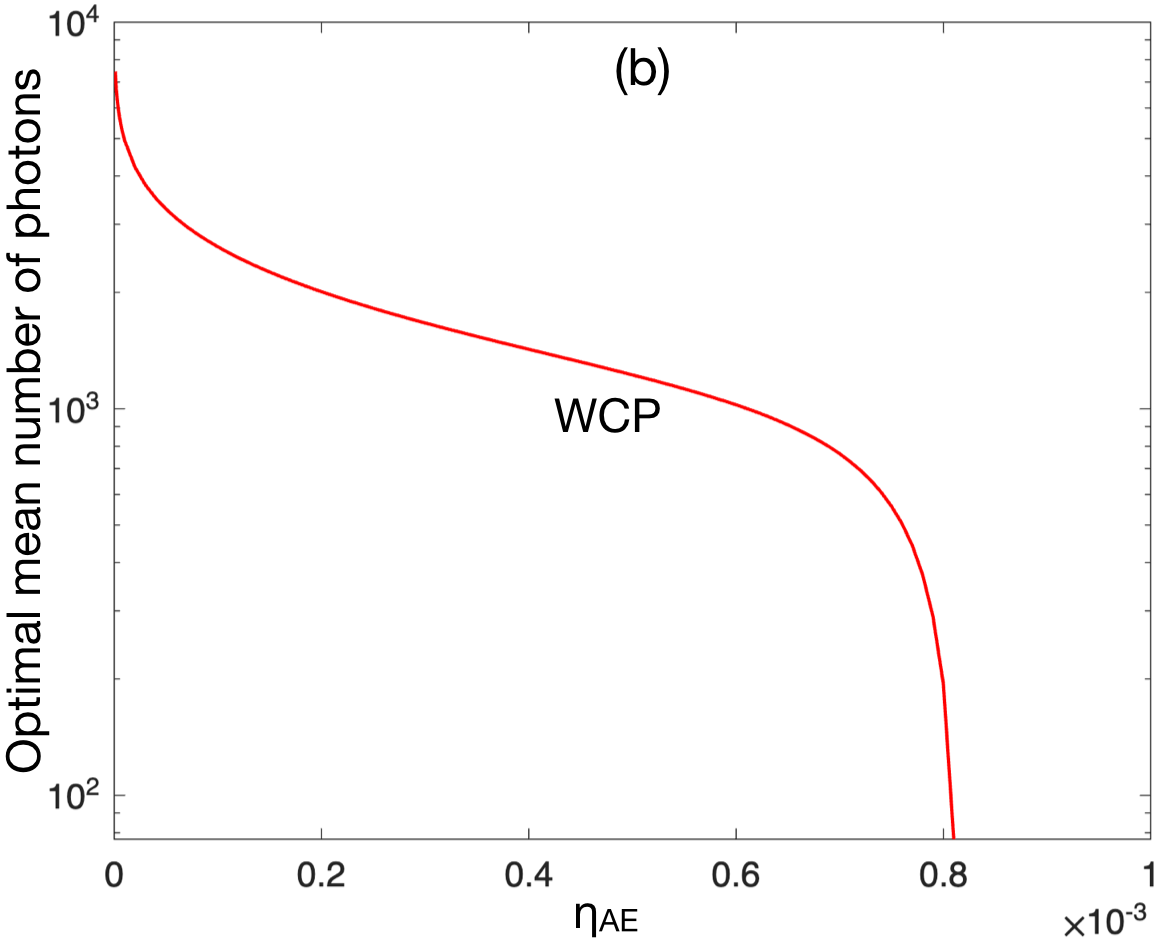}
\caption{(a) Secret key generation rate versus $\eta_{\rm AE}$ for WCP and SPS sources. (b) Optimal values of $\mu$ versus $\eta_{\rm AE}$ for WCP sources.} 
\label{fig:DV-rate}
\end{figure}
Figure~\ref{fig:DV-rate}(a) shows the secret key rate versus $\eta_{\rm AE}$ for SPS and WCP protocols. We have optimized the key rate over $\mu$ in the WCP case. The optimum values of $\mu$ are shown in \cref{fig:DV-rate}(b). There are several interesting points to highlight in \cref{fig:DV-rate}:
\begin{itemize}
\item At the channel loss of 30 dB, the WCP protocol cannot provide any secret key under unrestricted Eve's assumption. In the restricted Eve case, however, we start having positive key rates for roughly $\eta_{\rm AE}< 8.1 \times 10^{-4}$. This suggests a simple phase randomized laser source is sufficient for key exchange in this regime.
\item The WCP protocol performance exceeds that of the SPS protocol at small values of $\eta_{\rm AE}$. This is interesting as the SPS source conventionally corresponds to the ideal BB84 protocol. In our example system, this happens at roughly $\eta_{\rm AE} < 8 \times 10^{-4}$. This is mainly because of the extra laser power that Alice can now use to generate signals with larger number of photons without worrying much about photon number splitting attacks. We do not have this possibility with SPSs, hence such sources would not allow us to benefit from Eve's restrictions in this case. 
\item { Among the three techniques proposed in \cref{sec:SPS} for the SPS source, the one obtained from \cref{general_formula_DV} offers the highest key rate. That is why the corresponding curve in \cref{fig:DV-rate} remains constant. Mathematically, this can be seen by comparing \cref{general_DV_res_SPS1} with \cref{general_formula_DV}, and noting that $h(\varepsilon_{11}^U) \geq h(e_1)$. The worst-case assumption made in \cref{eq:DVcases} seems to not offer any advantage in the single-photon case. To check if there is any room for improvement, we have verified if the bound can be improved by using numerical techniques for bounding the key rate \cite{Winick2018reliablenumerical, Bunandar2020}. We have however observed no change in the achievable rate and the result presented in \cref{fig:DV-rate} seems to be the optimum case for the SPS source.} The full detail of the numerical approach will be the subject of a separate publication. 
\item As mentioned earlier, the case of $\eta_{\rm AE} < \eta=\eta_{\rm ch}\eta_d$ is of special interest. This is when the bound $S_0^L$ in \cref{WCP_S_0_S_1} can take nontrivial values. We can see this effect in the parameter values chosen for our simulation, where $Q = 1-(1-p_{\rm dc})^2e^{-\eta\mu}$. In this case, we have $Q-(1-p_0^{\rm Eve}) = e^{-\mu'}-(1-p_{\rm dc})^2 e^{-\eta\mu}>e^{-\eta_{\rm AE}\mu} - e^{-\eta\mu}$. The latter term would get a positive value when $\eta_{\rm AE} < \eta$, resulting in a positive value for $S_0^L$. 
\end{itemize}

\section{Conclusions and Discussion}
\label{SecCon}
The security of prepare-and-measure QKD systems under certain restrictions on the eavesdropper was studied. We relaxed some of the assumptions on the eavesdropper's unrestricted capabilities in collecting and re-transmitting QKD signals.
Such restrictions could particularly find relevance in satellite-based QKD protocols. Our restrictive assumptions resulted in an under-explored scenario, where the channel between Alice and Bob is not entirely controlled by Eve, but rather an uncharacterized bypass channel could also carry signal. We found generic upper bounds on the key rate for QKD systems in the presence of bypass channels, and in the case of CV-QKD with reverse reconciliation showed that the upper and lower bounds on the key rate are very close to each other in certain practical regimes of interest. Such an upper bound offers a considerable boost to the key rate that can be achieved under unrestricted eavesdropping. In the case of CV-QKD with direct reconciliation, or that of BB84 protocols, the advantage offered by our customized bound was limited to certain scenarios where Eve's access to Alice's signal is significantly hampered, as is the case, for instance, in wiretap channels. Nevertheless, our approach to security proof relies only on a few assumptions, which can, in principle, be verified with monitoring techniques.

The analysis of QKD systems in the presence of bypass channels can certainly be extended in several directions, where each is worth a separate investigation. For instance, the difference between reverse and direct reconciliation in the CV-QKD case raises the question of whether DV-QKD with reverse reconciliation could offer any better performance. One way to answer such questions is by developing numerical techniques for finding tight bounds on the key rate in such setups, which is ongoing research. While Theorem 1 is applicable to finite-size key settings, the issue of statistical fluctuations in the presence of the bypass channel needs to be further investigated. Whether the bypass channel affects non prepare-and-measure QKD protocols, e.g., entanglement-based QKD, also needs to be investigated. { In this work, we mainly focused on LEO satellite scenarios, but, in principle, the same techniques could find application in medium-earth orbit and geostationary satellite missions as well. The practicality of this needs to be investigated as monitoring techniques would become less efficient at long distances.} Overall, while the key application of such an analysis could be in satellite-based systems, the whole area of QKD security under unconventional assumptions is a less explored territory, which deserves more attention. { One generic direction of travel is to consider the classical limitations that one can impose on Eve. This work was effectively concerned with limiting the size of an eavesdropping object, but this can be extended to other classically measurable attributes of Eve.} We hope that works like this manuscript can open new avenues of research in this area.

\section*{Acknowledgments}
M.R. is grateful to Norbert L{\"u}tkenhaus, Xiongfeng Ma, and Charles C. W. Lim for fruitful discussions around the security analysis. This work has been partially sponsored by the White Rose Research Studentship, the EPSRC via the UK Quantum Communications Hub with Grant Nos. EP/M013472/1 and EP/T001011/1, and the European Union's Horizon 2020 research and innovation programme under the Marie Sklodowska-Curie grant agreement number 675662 (QCALL). 
M.G. would like to additionally acknowledge support from the European Union via \textquotedblleft Continuous Variable Quantum Communications\textquotedblright\ (CiViQ, Grant agreement No. 820466). F.G. and H.K. acknowledge support from the Deutsche Forschungsgemeinschaft (DFG, German Research Foundation) under Germany’s Excellence Strategy - Cluster of Excellence Matter and Light for Quantum Computing (ML4Q) EXC 2004/1 -390534769. H.K. also acknowledges support by the QuantERA project QuICHE, via the German Ministry for Education and Research (BMBF Grant No. 16KIS1119K). 

All data generated in this paper can be reproduced by the provided methodology and equations.

\section{Appendices}
\appendix

\section{Estimating $\eta_{\rm AE}$ and $\eta_{\rm EB}$ parameters}
\label{Carlo}

In this Appendix, we find nominal values for parameters $\eta_{\rm AE}$ and $\eta_{\rm EB}$ if Alice and Bob are equipped with the LIDAR technology for detecting unwanted objects around them. 

\subsection{Optical Setup}
\label{sec:setup}

In this section we specify the optical setup considered in our calculation for the two authorized QKD parties, Alice (A) and Bob (B), and the eavesdropper, Eve (E). We assume A is located on a low earth orbit (LEO) satellite, travelling in a circular orbit at an altitude $L$ above the ground. It is equipped with a QKD source and a telescope with aperture radius $r_A$. B is instead placed on the surface of the Earth and he collects the light sent by A using a telescope with radius $r_B$. We address the static situation in which the satellite is at a fixed position right above the optical ground station, so that the length of the link is exactly $L$. In the following calculations, we will allow E to have two distinct satellites, one for collecting and one for re-sending the light, with appropriate values of the aperture radius and position. However, it turns out that the configuration of a single satellite is indeed optimal for her. We can therefore assume that E is represented by a spacecraft equipped with two telescopes, one for collection (pointed towards A) and one for transmission (pointed towards B), both of radius $r_E$. We also assume, as the worst-case scenario, that the aperture of the telescope represents the whole projected area of E's spacecraft.  

We assume that A's telescope sends the QKD signals in the form of a Gaussian beam, with initial beam width $W_0$, equal to its radius $r_A$, at wavelength $\lambda$. For the light propagation we neglect the action of the atmosphere and the contribution of pointing errors. We use the standard expressions for Gaussian optics, corrected through the quality factor $M^2$ in order to replicate the far-field divergence of real optical elements. E's telescope is instead perfect, meaning that she can send Gaussian beams with $M^2=1$. 

In the following we will call $z$ the coordinate along the propagation path, so that A is at $z=0$ and B at $z=L$. After a propagation of length $z \in [0,L]$, the beam width can be expressed as
\begin{align}
\label{Wz}
W(z)=W_0 \sqrt{1+\bigg(\frac{z M^2}{z_R}\bigg)^2} \ ,
\end{align}
where $z_R=\pi W_0^2/\lambda$ represents the Rayleigh range of the beam. Comparison between the far-field divergence of a perfect Gaussian beam and the divergence measured for the Micius satellite suggests a value $M^2 \approx 3$. The transmittance of such a beam, when impinging at the centre of a circular collecting aperture of radius $\rho$ can be expressed as
\begin{align}
\label{eta}
\eta(\rho,z)=1-\exp\bigg[-2 \frac{\rho^2}{W^2(z)}\bigg].
\end{align}
This expression can be used to compute the transmittance of A's beam through B's telescope, by setting $z=L$ and $\rho=r_B$:
\begin{align}
\label{etaAB}
\eta_{\rm AB}=1-\exp\bigg[-2 \frac{r_B^2}{W^2(L)}\bigg] \ ,
\end{align}
which describes the efficiency of the QKD channel, apart from additional losses like atmospheric absorption, detection efficiency and transmittance of the optical elements. 
The same formula can express the efficiency with which E can collect A's signals, while she is at position $z$ and has a collecting aperture of radius $r_E(z)$
\begin{align}
\label{etaAE}
\eta_{\rm AE}(z)=1-\exp\bigg[-2 \frac{r_E(z)^2}{W^2(z)}\bigg] \ .
\end{align}
We assume here that E is positioned at the exact centre of the beam.
The way we model the dependence of $r_E(z)$ on the distance from A and B will be specified in the next section.

We can use a similar approach to estimate the ability of E to re-send the signals that she has intercepted towards B. In order to take full advantage of her optical system, we allow E to send focused beams. It is not necessary to take this into account in the case of A, because for a typical LEO satellite the total propagation length $L$ is much larger than the Rayleigh range $z_R \approx 70$~km, so focusing would not give much advantage. For our calculations, we suppose that E has a lens of focal length $f$ just in front of her sending aperture. We can then use the ray transfer matrix formalism and obtain the following expression for the optimized width of a focused beam at distance $d$ from its transmitter \cite{SalehTeichBook}:
\begin{align}
\label{WE}
W_E(z)=\frac{\lambda d}{\pi r_E(z)} = \frac{\lambda (L-z)}{\pi r_E(z)} \ ,
\end{align}
which agrees with \cref{Wz} when $z \gg z_R$ and $r_E=W_0$. Now, using \cref{eta}, we can compute the transmittance of E's beam through B's aperture as follows
\begin{align}
\label{etaEB}
\eta_{\rm EB}(z)=1-\exp\bigg[ -2 \frac{r_B^2}{W_E(z)^2} \bigg] \ .
\end{align} 
We point out that, even in this case, the dependence of $r_E(z)$ on the length of A-E and E-B links is important and will be modelled in the next section.

\subsection{Techniques for Channel Monitoring}
\label{monitor}

In this section we obtain an upper bound on the size of an undetected E's spacecraft, depending on the distance from A's or B's position, if some sort of channel monitoring system is employed. 
Typical techniques are RADAR, LIDAR and direct optical detection. We will not analyze the last one, as it requires rather stringent conditions: E's spacecraft must be illuminated by the sun while the receiver is in eclipse and the sky must be clear. A RADAR is very power-consuming, so we will address this technique as operated only from B, on the ground (although examples of radars on spacecrafts can also be found). LIDARs instead require much less power and share similar optical elements as those used for QKD, so may be placed on both A's and/or B's sides.

The operation of a RADAR/LIDAR system can be described by the so-called RADAR equation:
\begin{align}
d_{\rm max}=\bigg(\frac{P_T G^2 \lambda \sigma}{P_{\rm min} (4\pi)^3 \kappa}\bigg)^{1/4} \ ,
\end{align}
which expresses the maximum distance at which an object with radar cross section $\sigma$ can be detected.
We are interested in the inverse dependence for the maximum $\sigma(z)$, for a space object at location $z$, i.e. distance $L-z$ from B, which is given by
\begin{align}
\label{radar}
\sigma(z)= & \frac{P_{\rm min}(4 \pi)^3 \kappa d_{\rm max}^4}{P_T G^2 \lambda^2} \notag \\
= & \frac{P_{\rm min}(4 \pi)^3 \kappa (L-z)^4}{P_T G^2 \lambda^2}.
\end{align}
Here, $P_{min}$ represents the minimum power measurable by the receiving system, $P_T$ is the total power emitted, $G$ is the gain of the radar antenna, and $\kappa$ is a parameter that accounts for all additional sources of loss.

In order to assess the applicability of a RADAR system on Bob's end, we use the following parameter values:
\begin{itemize}
\item $G=\frac{4\pi \ E \ \pi \ r_{\rm ant}^2}{\lambda_{R}^2}$ where $E=0.6$ is the antenna efficiency, $r_{\rm ant}=2$~m is the radius of the circular parabolic antenna and $\lambda_{R}=4$ cm is the wavelength of the radar signals. We chose $r_{\rm ant}=2$~m as a reasonable size for a dish to be put alongside an optical ground station.
\item $P_T=10^5$ W, as it is the power usually used in systems of this size (like the ones used in airports).
\item $P_{\rm min}=k_B T F_n B$, with $k_B$ the Boltzmann constant, $T$ the temperature, $F_n=8$ dB is the so-called noise figure and $B=2.5 \times 10^6$ Hz is the effective noise bandwidth of the setup.
\item $\kappa=7$ dB takes into account attenuation from atmospheric effects, filters and other sources.
\item We also assume that $L=500$~km corresponding to a LEO satellite. 
\end{itemize}  
In general the radar cross section $\sigma$ is not equal to the geometric projected area and it strongly depends on the shape of the object. Only for spherical objects these two quantities coincide and this is the case we consider here. In this way, we can set the radius of E's telescope to $r_E=\sqrt{\sigma/\pi}$. Figure~\ref{fig:radar} shows the minimum size of $r_E$, calculated from Eq.~(\ref{radar}) at the above parameter values, if a radar is located at Bob's site, i.e., at $z=L$. Figure~\ref{fig:radar} suggests that, if we only use RADAR at Bob's end, we can easily miss eavesdropping objects of a few meters in radius. This implies that we may not achieve useful bounds on $\eta_{\rm AE}$ and $\eta_{\rm EB}$, in Eq.~(\ref{etaAE}) and Eq.~(\ref{etaEB}), if we only rely on RADAR as a monitoring system. Even assuming that low-power radar could be employed on the satellite to monitor the first tens of km around it, a telescope of 3~m in radius at 100~km from A would be able to intercept and resend with transmittances very close to 1. In practice, radar techniques are currently used to monitor the number of objects present in low orbits around the Earth \cite{Klinkrad2004}. However, much bigger facilities (antenna radius $\gtrsim 10$ m) are necessary for such missions and the information is usually not in real-time, but used to build and update catalogues of the objects. {We would therefore consider the radar solution insufficient for our purposes, while passive monitoring could always provide additional information}. We next consider the LIDAR option.

\begin{figure}[b]
\includegraphics[width=0.5\textwidth-15pt]{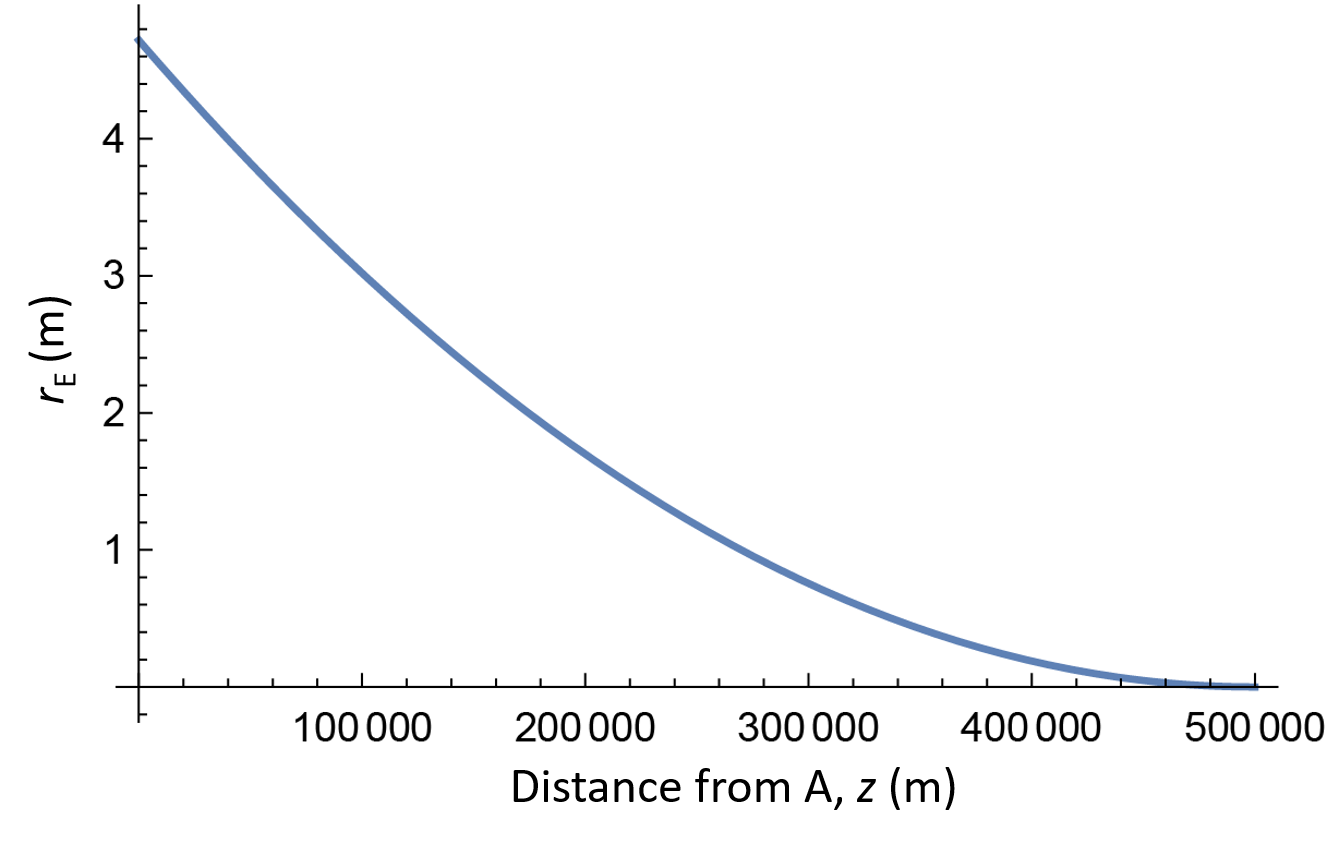}
\caption{The minimum radius for E's telescope aperture detectable by a typical radar system located at Bob's site. Note that the distance from Bob is measured by $L-z$. An object 500~km away from Bob must have a radius greater than 4~m, for our chosen parameter values, to be detectable by Bob's radar system.}
\label{fig:radar}
\end{figure}

Much better performance can be achieved using LIDARs. The working principle is the same as radars, but in this case light in the near ultraviolet, visible, or near infrared range is sent and recorded after reflection from the object under study. In this case, instead of enormous antennas, we only need telescopes of reasonable sizes. For example, the same telescopes used for exchanging QKD signals, or alignment, can be used for LIDAR operation. Moreover, instead of powers of tens of kW, lasers with power on the order of 1~W are sufficient, meaning that this technique can rather easily be implemented on even small satellites, as well as on Bob's side. As expected, the big advantage comes from the much shorter wavelength of the employed light with respect to the microwave signals used in the radar technique, resulting in much smaller diffraction of the electromagnetic beams.

In this case, we can try to use again the standard radar equation of Eq.~(\ref{radar}), with suitably chosen parameters. We report here a simple calculation, using again Gaussian optics, that gives a result very similar to the radar equation (with LIDAR parameters), for when the LIDAR is placed on the satellite. A similar calculations can be used for a LIDAR based in the ground station. 
We use Eq.~(\ref{WE}) and modify it to take into account the realistic quality factor $M^2$ as estimated before
\begin{align}
W_{\rm LIDAR}(z)=\frac{\lambda_{\rm LIDAR} z M^2}{\pi W_0} \ ,
\end{align}
where $\lambda_{\rm LIDAR}$ is the LIDAR wavelength.
The intensity distribution of such a beam can be expressed as
\begin{align}
\label{intensity}
I(r,z)=\frac{2 P_T}{\pi W_{\rm LIDAR}(z)^2} \exp \bigg[ \frac{2 r^2}{W_{\rm LIDAR}(z)^2} \bigg] \ ,
\end{align}
where $P_T$ is the total power carried by the beam and $r$ is the distance from the beam centre in the plane transversal to the direction of propagation. We assume that the reflecting object is at the centre of the beam.

We compute the total power incident on the object integrating Eq.~(\ref{intensity}) in the area corresponding to E's spacecraft as follows
\begin{eqnarray}
P(z)&=&\int_{|r|<r_E} I(r,z) dr d\theta= \nonumber \\
&=&P_T \bigg(1-\exp\bigg[-\frac{2 r_E^2}{W_{\rm LIDAR}(z)}\bigg]\bigg) .
\end{eqnarray}
We assume that the light is reflected back isotropically by the object under study, with reflectivity $\alpha$, resulting in a received light intensity of
\begin{eqnarray}
I_R(z)&=&\frac{P(z) \alpha}{4 \pi z^2}= \nonumber \\
&=&\frac{P_T \alpha}{4 \pi z^2} \bigg(1-\exp\bigg[-\frac{2 r_E^2}{W_{\rm LIDAR}(z)}\bigg]\bigg) \ .
\end{eqnarray}
The total collected power reaching the satellite LIDAR is then $P_R(z) = I_R(z) \pi r_A^2 \kappa$, where we account for any additional loss encountered during transmission and collection by factor $\kappa$. In order to obtain the bound on the size of E's object, we can then invert this expression and equate $P_R(z)$ to the minimum power $P_{\rm min}$ measurable by the receiving setup, as follows:
\begin{align}
\label{A}
r_E(z)^2=-\bigg(\ln\bigg[1-\frac{2 P_{\rm min} k z^2}{\alpha P_T W_0^2}\bigg]\bigg)\bigg(\frac{\lambda_{\rm LIDAR} z M^2}{\pi W_0}\bigg)^2 \ .
\end{align}

For the rest of this section, unless otherwise noted, we use the following parameter values. We set $\lambda_L=800$ nm and assume $\kappa=0.25$. The transmitted power is set to $P_T=1$~W due to the limit on the power consumption on small satellites. For the ground-based LIDAR, this value could even be higher, although, offering a small advantage, as we show by the end of this section.
We choose a rather conservative value for the reflectivity of the object, $\alpha=0.1$, considering that for different metals it is usually around $\alpha=0.5$ or more. Coating can be used to lower this value, however, measurements at different wavelengths could limit the effectiveness of this technique. We also assume that $M^2 = 3$, $r_A=15$~cm, and $r_B=50$~cm. These values are compatible with the instruments used in the Chinese satellite mission Micius. All other relevant parameters are the same as the radar case. 

Figure~\ref{fig:lidar1} shows the estimated maximum radius of E's object that does not trigger our LIDAR monitoring system, versus its distance from the satellite. The results obtained by using Eq.~(\ref{A}) and Eq.~(\ref{radar}) are both shown. They differ because the efficiency of the transmitter and the reflectivity of the object are modelled in different ways. We see that the bound on the size of undetectable objects, $r_E$, is much smaller as compared to the values shown in \cref{fig:radar} using the radar technique, giving hope that the values obtained for $\eta_{\rm AE}$ and $\eta_{\rm EB}$ in this case may be low enough to be useful in the enhancement of the secret key rate. 

\begin{figure}[b]
\includegraphics[width=0.5\textwidth-15pt]{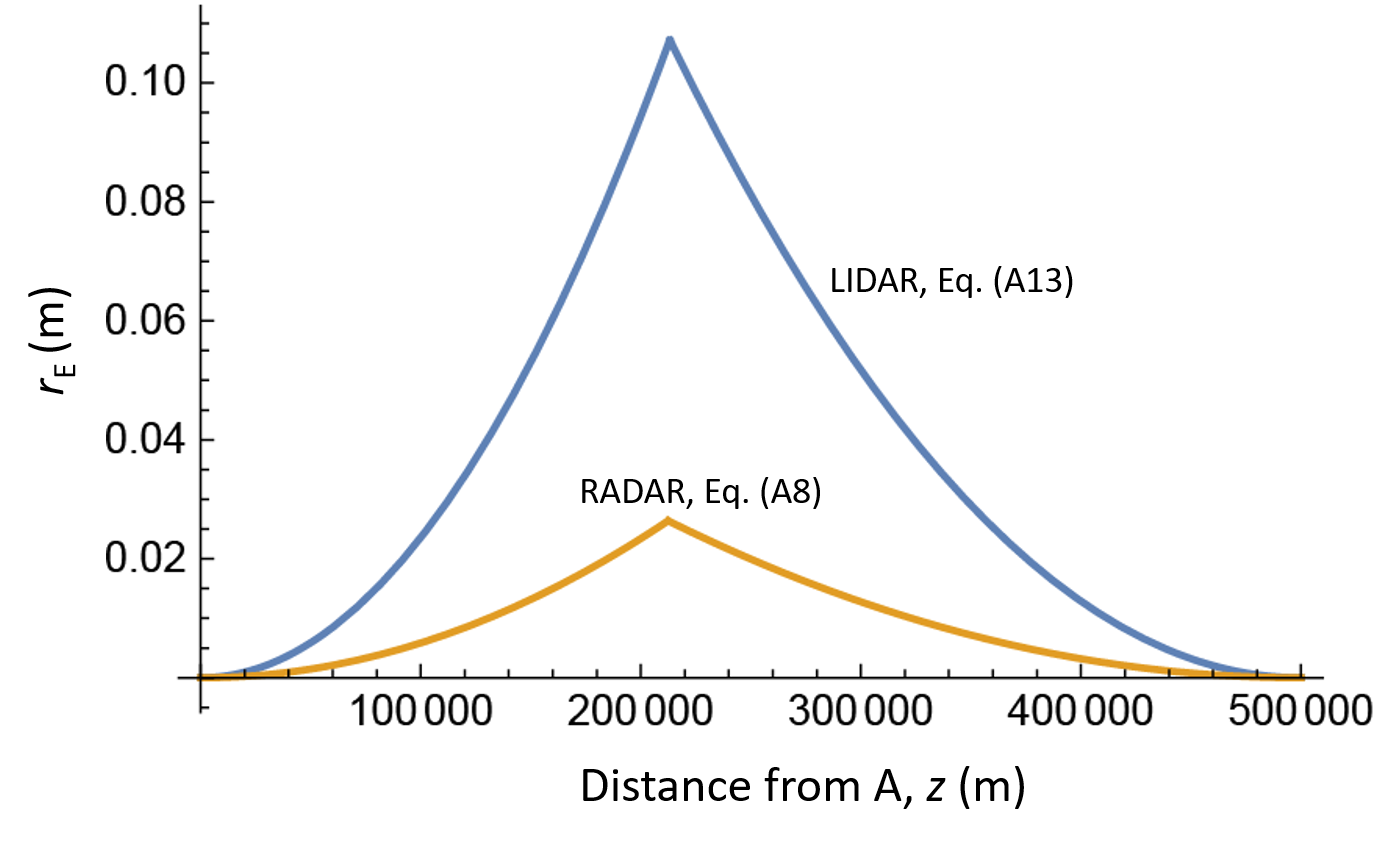}
\caption{Minimum radius for E's telescope aperture detectable by LIDAR measurements performed, simultaneously, from the satellite and from the ground. The bound on this quantity obtained from two different techniques: The blue curve is obtained from Eq.~(\ref{A}), while the orange curve from the radar equation Eq.~(\ref{radar}) using parameters suitable for a LIDAR system.}
\label{fig:lidar1}
\end{figure}

The minimum measurable power $P_{\rm min}$ used in Fig.~\ref{fig:lidar1} is obtained by calculating the background light collected by the satellite in normal working conditions. For the LIDAR placed on the satellite, the main source of background light during night-time operation is represented by the light of the Moon reflected by the Earth \cite{Bonato_2009}, which can be expressed as follows
\begin{align}
P_{\rm min}^A=\alpha_E \alpha_M R_M^2 r_A^2\frac{\Omega_{fov}}{d_{EM}^2}H_{sun} B_{filter}\ ,
\end{align}
where $\alpha_E$ and $\alpha_M$ are the albedo of Earth and Moon, $R_M$ is the radius of the Moon, $d_{EM}$ is the Earth-Moon distance, $H_{sun}$ is the Sun irradiance at $\lambda_L$ and $\Omega_{fov}$ is the field of view of the telescope and $B_{filter}$ is the bandwidth of the spectral filters.
For the LIDAR on the ground, we estimate the background light from the analysis in \cite{miao}, as follows
\begin{align}
P_{\rm min}^B=H_b \Omega_{fov} \pi r_B^2 B_{filter} \ ,
\end{align} 
where $H_b$ is the brightness of the sky background. The typical value for such background lights is very small suggesting that in order to obtain some statistics about such sources we may need to use single-photon detectors in our LIDAR system \cite{tachella2019}.

The previous analysis does not take into account the fact that the LIDAR detection from the ground will be strongly affected by the presence of the atmosphere. The air can back-scatter the light sent by Bob's LIDAR, especially when the sky is not completely clear, giving a signal that can be attributed to Eve's object. This means that, without additional analysis, every time we will measure a reflected power greater than $P_{\rm min}$, we will think that this is because of Eve's apparatus and the measured power will be used to bound its size. If part of the back-scattered light is due to the atmosphere, we will end up over-estimating the size of Eve's object, and consequently its collecting efficiency. In that sense, while this issue can loosen our lower bound on the key rate, it does not make our analysis unreliable.

\subsection{Bounds on $\eta_{\rm AE}$ and $\eta_{\rm EB}$}
\label{etas}

In this section we report the numerical results for the E's collecting and re-sending efficiencies, obtained using the analysis provided in the previous sections. Figure~\ref{fig:eta} shows the values of $\eta_{\rm AE}$ and $\eta_{\rm EB}$, computed, resepctively, from Eq.~(\ref{etaAE}) and Eq.~(\ref{etaEB}), as a function of $z$. In both graphs, the maximum value happens somewhere in the middle of the orbit. This is because we are using LIDAR on both A and B, and the maximum value is achieved at the point where E's telescope is the biggest, which is roughly in the middle. This happens because the width of the beams, during the propagation, vary linearly with $z$, while the bound on E's size is proportional to $z^2$ (equivalently, the cross-section in Eq.~(\ref{radar}) is proportional to $z^4$). We see that $\eta_{\rm AE}$ remains below 0.1, while $\eta_{\rm EB}$ grows up to about 1. There are two main reasons for this behaviour. First, we allow E to use perfect optics that generate Gaussian beams with minimal divergence and second, B's telescope aperture is bigger than A's.

\begin{figure}[t]
\includegraphics[width=0.5\textwidth-15pt]{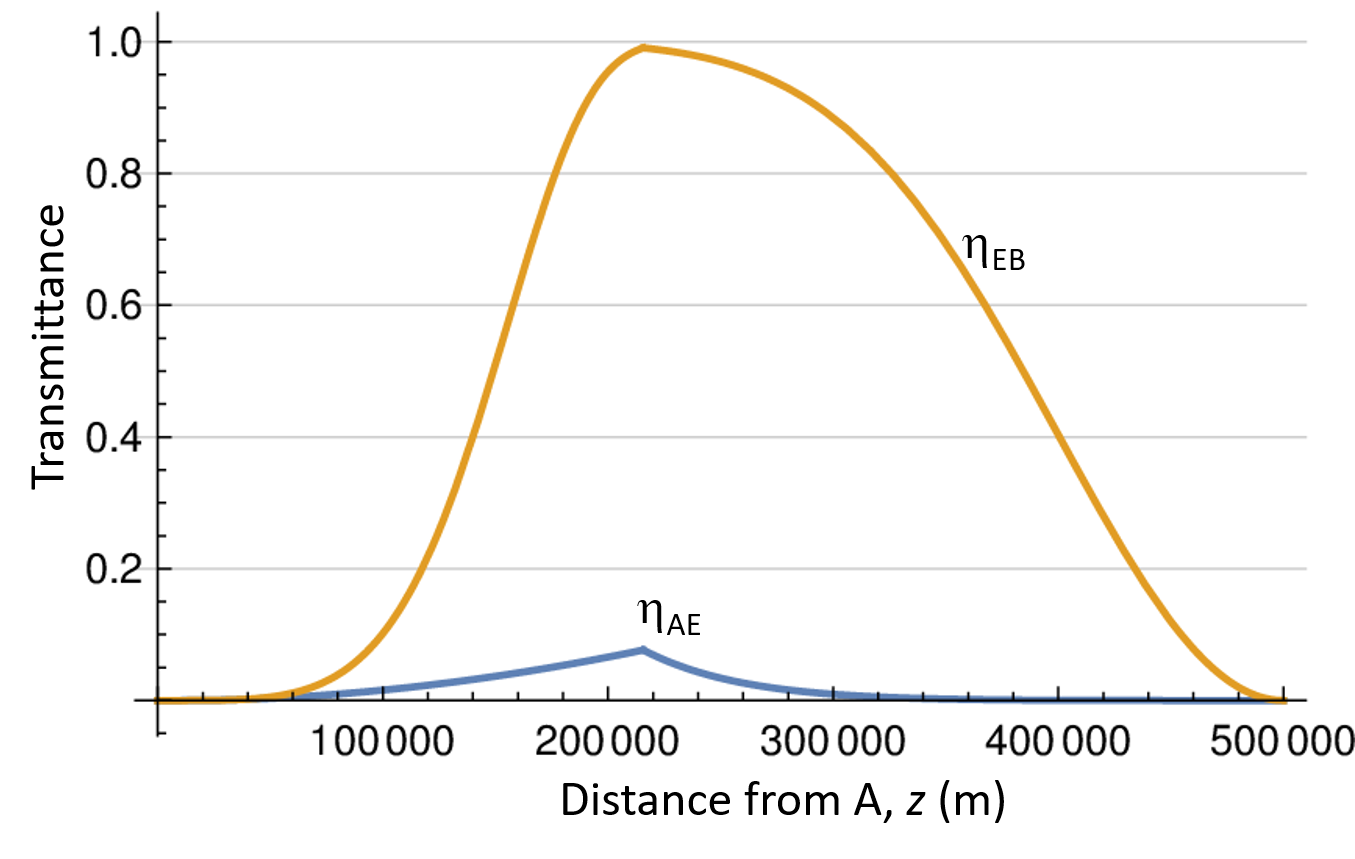}
\caption{Values of $\eta_{\rm AE}$ and $\eta_{\rm EB}$, for an undetected Eve, computed using Eq.~(\ref{etaAE}) and Eq.~(\ref{etaEB}), respectively.}
\label{fig:eta}
\end{figure}

Figure~\ref{fig:rs} shows the values of some quantities of the setup as a function of the coordinate $z$, useful to understand the behaviour observed in Fig.~\ref{fig:eta}. The $r_E$ curve close to the x-axis is the same as the upper curve in Fig.~\ref{fig:lidar1}, which shows the maximum radius of the undetected E. The $W_E$ curve represents the width of the beam, sent by E at distance $z$ from A with a telescope of radius $r_E(z)$, when it arrives at B's receiving plane. The $W_{\rm LIDAR}$ curve is, instead, the width of the beam sent by A as it propagates towards B. We see that when it arrives at B, after 500~km of travelling, the beam is about 2.5~m in radius, which is several times larger than that of B's telescope, giving a transmittance between the legitimate parties of $\eta_{\rm AB}=0.05$ (only considering diffraction losses, without collection and detection losses). As for Eve, however,  the minimum of the $W_E$ curve is roughly 30~cm at B, which is smaller than B's telescope size, resulting in $\eta_{\rm EB}\simeq 1$. { Note that $W_E$, in \cref{WE}, is inversely proportional to $r_E(z)(L-z)$, which justifies the asymmetry in the graph.}

\begin{figure}[t]
\includegraphics[width=0.5\textwidth-15pt]{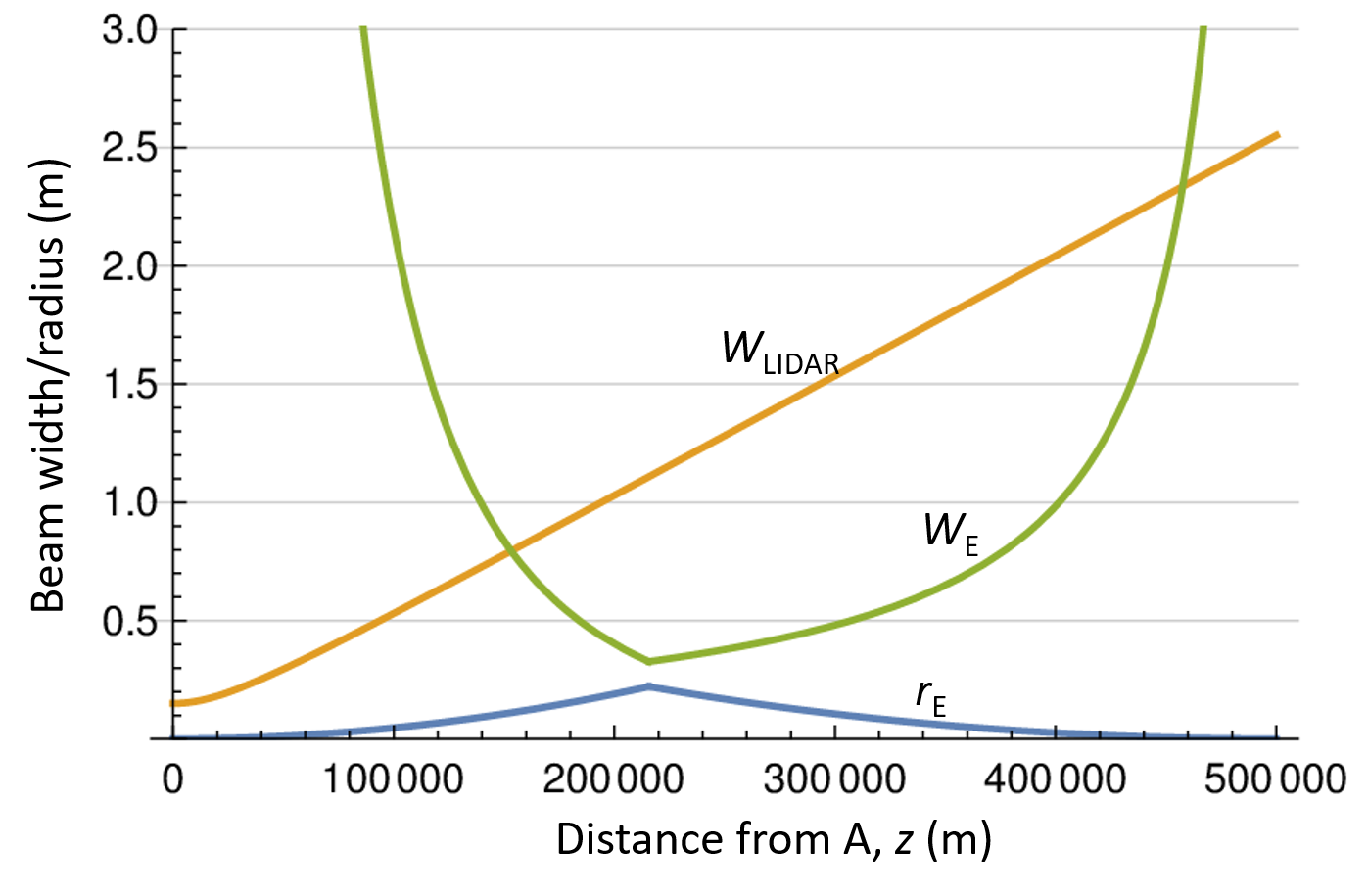}
\caption{The maximum radius, $r_E$, of the undetected Eve's object, the width of the propagating beam, $W_{\rm LIDAR}$, sent by Alice, and the width of signal sent by Eve at point B, $W_E$, versus $z$.}
\label{fig:rs}
\end{figure}

The values in Fig.~\ref{fig:eta} can be lowered by raising the value of LIDAR's transmitted power. Notice that $r_E \propto P_T^{1/2}$, so if we raise the power by a factor 4, to 4 W, the bound on Eve size will be halved. In this case, smaller values of $\eta_{\rm AE}$ and $\eta_{\rm EB}$ are expected, as shown in Fig.~\ref{fig:eta2}. $\eta_{\rm AE}$, in particular, reaches a maximum of about 3\%, giving big room for improvement in the achievable key rate. This bound very strongly depends on the minimum measurable power $P_{\rm min}$. Any improvement in the filtering techniques (defined by the parameters $B_{filter}$ and $\Omega_{fov}$) will improve the performance. In the same way, going to lower wavelengths will reduce the diffraction losses and improve the bound. We point out that, in practice, the monitoring can possibly be repeated with a rather low frequency, leaving the remaining time for the QKD signal exchange. This means that the power actually consumed during monitoring operation should be manageable even by small satellites. {On the other hand, if QKD missions are merged with remote sensing missions used for earth observations, then large satellite payloads, and therefore, high-power LIDAR systems can be used, which considerably improve the bounds on $\eta_{\rm AE}$ by one to two orders of magnitude. Examples include 562~W LIDAR used in CALIPSO and 1865~W in LITE missions.}

\begin{figure}[ht!]
\includegraphics[width=0.5\textwidth-15pt]{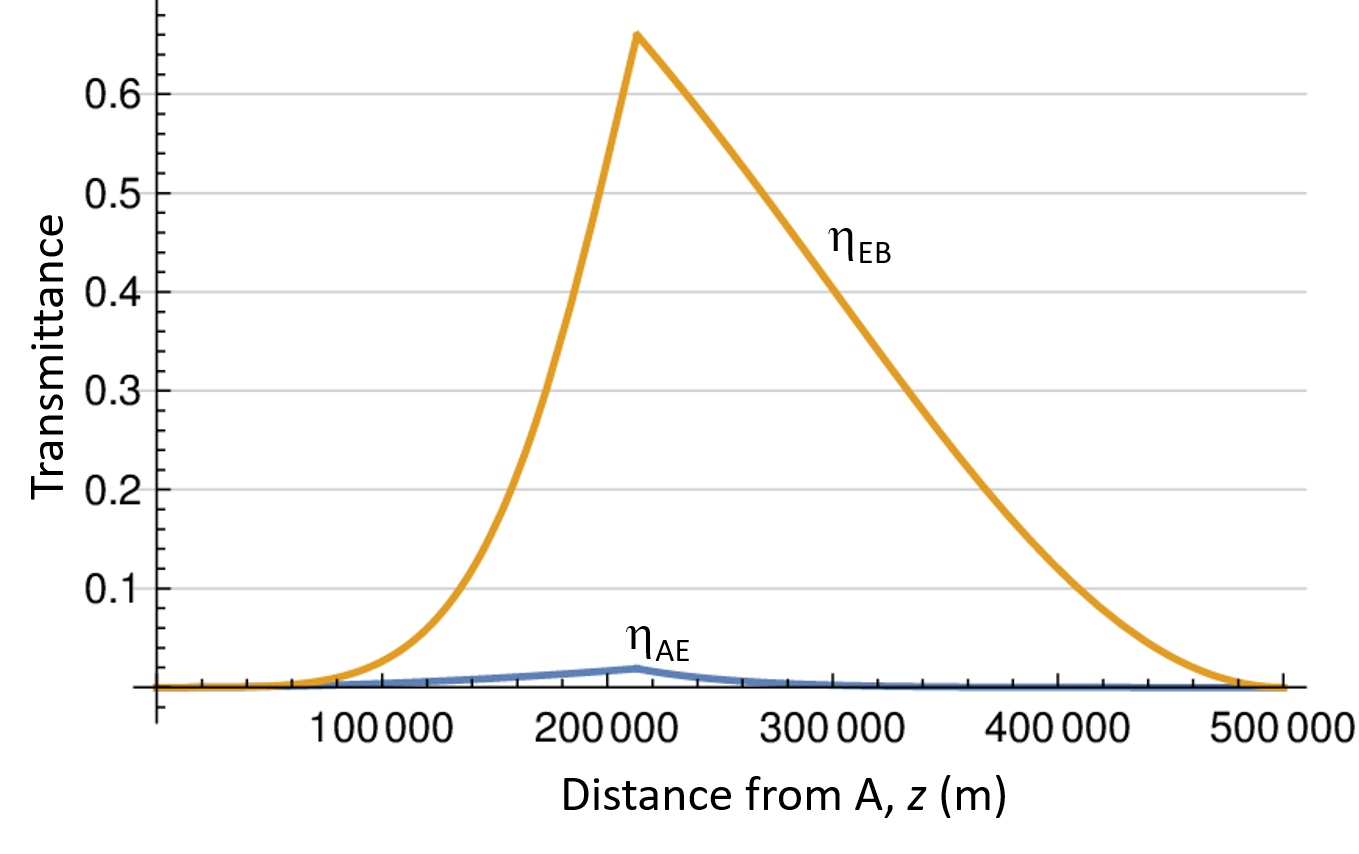}
\caption{Values of $\eta_{\rm AE}$ and $\eta_{\rm EB}$, for an undetected Eve, computed using Eq.~(\ref{etaAE}) and Eq.~(\ref{etaEB}), for a power of 4 W.}
\label{fig:eta2}
\end{figure}

The LIDAR technique, in the simplified approach we used in these calculations, is sensitive to the total power reflected by objects illuminated by the transmitted light. This means that we are safe even in the situation where Eve places more flying objects, which taken alone would be smaller than the detectable size. If we detect that an object or more are passing between A and B, by measuring a received power $P_R>P_{\rm min}$, we can assume that they are all malicious, estimate their size by replacing $P_{\rm min}$ with $P_R$ in the expressions above and bound $\eta_{\rm AE}$ and $\eta_{\rm EB}$ in the real case. 

We point out again that the presence of back-reflections from the atmosphere would give an over-estimation of the size of Eve when measured from B, which has not been considered here, leading to higher values of $\eta_{\rm AE}$ and $\eta_{\rm EB}$. More sophisticated techniques should be able to address this problem, for instance, using the timing information obtained when using the LIDAR in the pulsed regime. The advantage introduced by sending a beam with higher power, analyzed in Fig.~\ref{fig:eta2}, would be less effective for B, because it would correspond to more light back-reflected by the atmosphere, too.

Until now we have considered the static case where the satellite is fixed at the position closest to the ground station. 
We study now how the maximum values of $\eta_{\rm AE}$ and $\eta_{\rm EB}$ (optimal for E) vary during the passage of the satellite. We show the results in Fig.~\ref{fig:vsL} at $P_T=$1 W transmitted power for the LIDAR system, and in Fig.~\ref{fig:vsL2} at $P_T=4$~W. As can be seen, both configurations perform well for high elevation angles, however, the higher power level is required to put useful bounds at low elevation angles. As pointed out before, if the available power output is limited, one can achieve the same performance by changing other parameters of the setup.

\begin{figure}[ht!]
\includegraphics[width=0.5\textwidth-15pt]{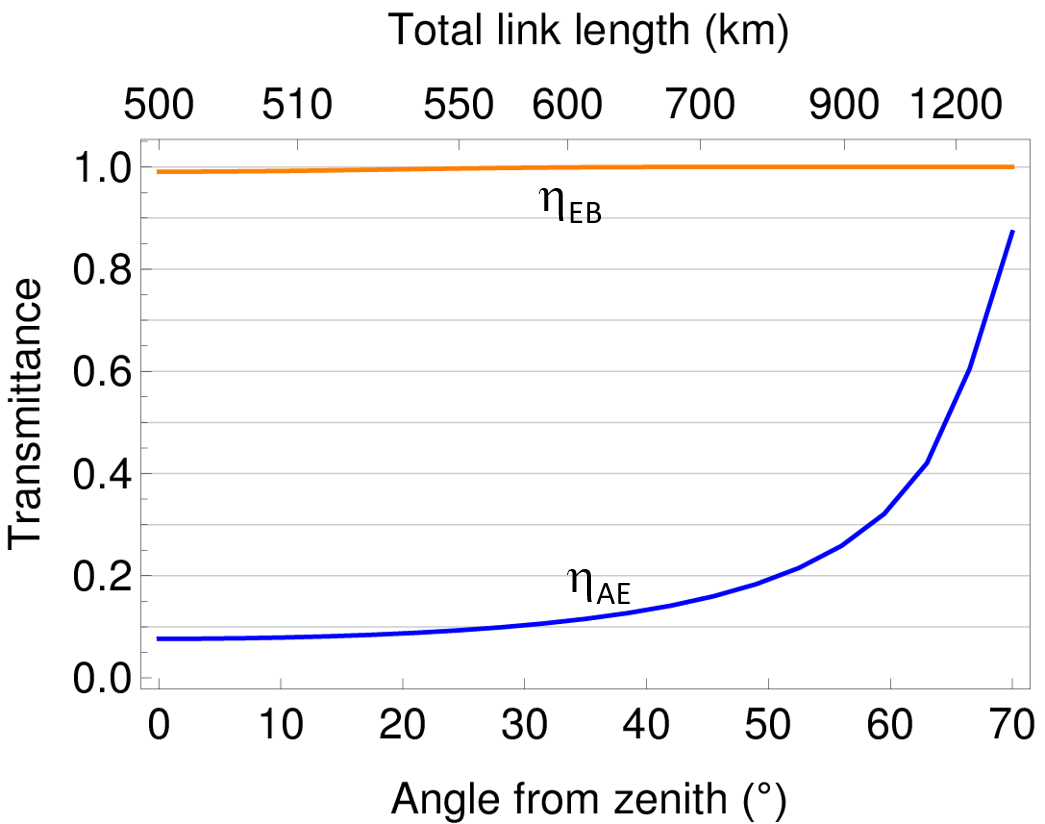}
\caption{Maximum values of $\eta_{\rm AE}$ and $\eta_{\rm EB}$, for an undetected Eve, as a function of the position of the satellite, for a LIDAR transmitted power of 1~W.}
\label{fig:vsL}
\end{figure}

\begin{figure}[ht!]
\includegraphics[width=0.5\textwidth-15pt]{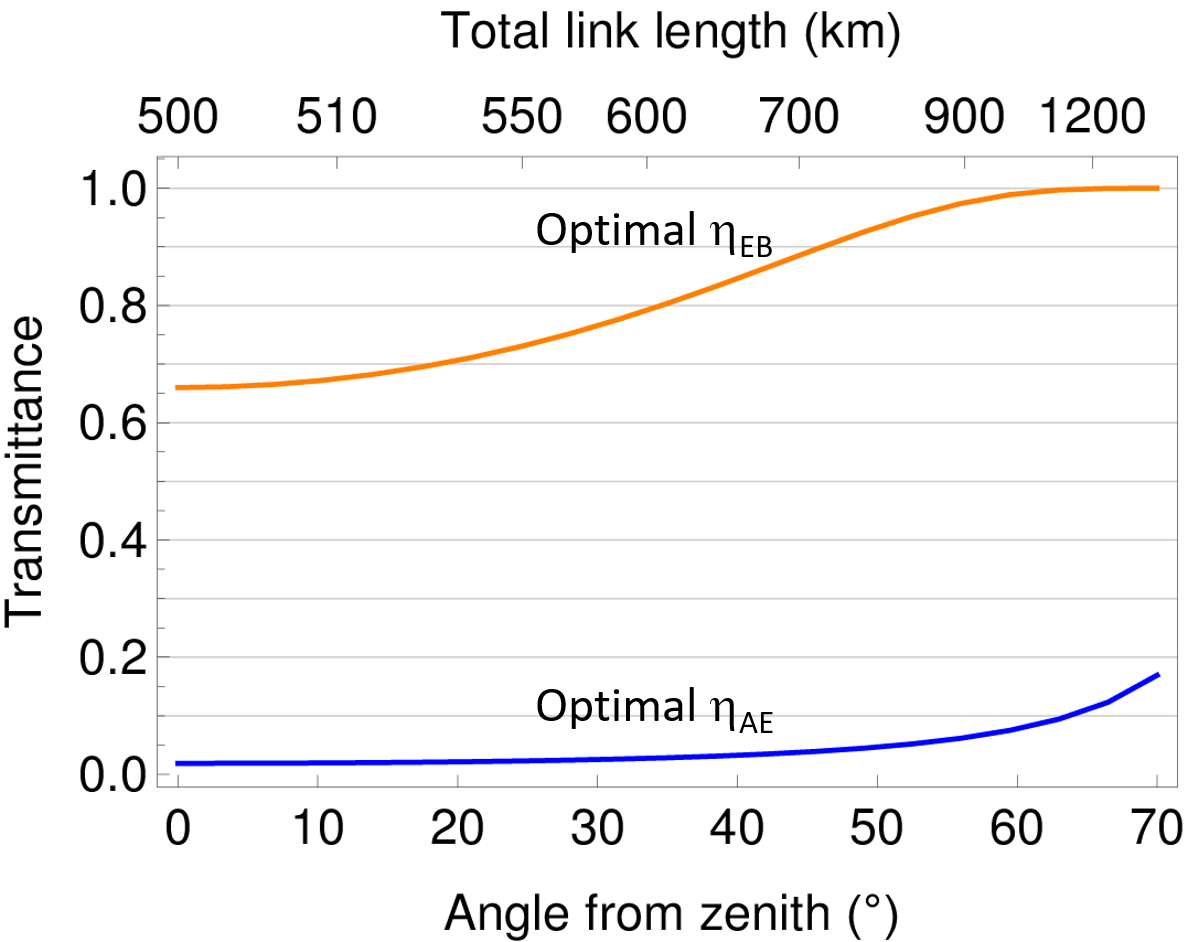}
\caption{Maximum values of $\eta_{\rm AE}$ and $\eta_{\rm EB}$, for an undetected Eve, as a function of the position of the satellite, for a LIDAR transmitted power of 4~W.}
\label{fig:vsL2}
\end{figure}

For comparison, we report in Fig.~\ref{fig:etaAB} the behaviour of $\eta_{\rm AB}$, from Eq.~(\ref{etaAB}), as a function of the position of the satellite. The upper curve represents only the diffraction losses, while in the lower curve other sources of loss are also considered. In particular, 50\% for detection loss, 80\% for the transmittance of the receiving optics, and absorption in the atmosphere is accounted for by $\chi_{ext}=\exp \big[ - \beta \sec(\theta) \big]$, where $\beta=0.7$ at $\lambda=800$~nm with $\theta$ being the angle from zenith. {Note that the expression for $\chi_{ext}$ is an approximate value at large values of $\theta$. We have, however, compared our results with that obtained from software tools such as MODTRAN 5, and the results are within an acceptable range for the purpose of this study.} The inclusion of pointing errors should have a fairly small impact, about 2-3 dB.

\begin{figure}[ht!]
\includegraphics[width=0.5\textwidth-15pt]{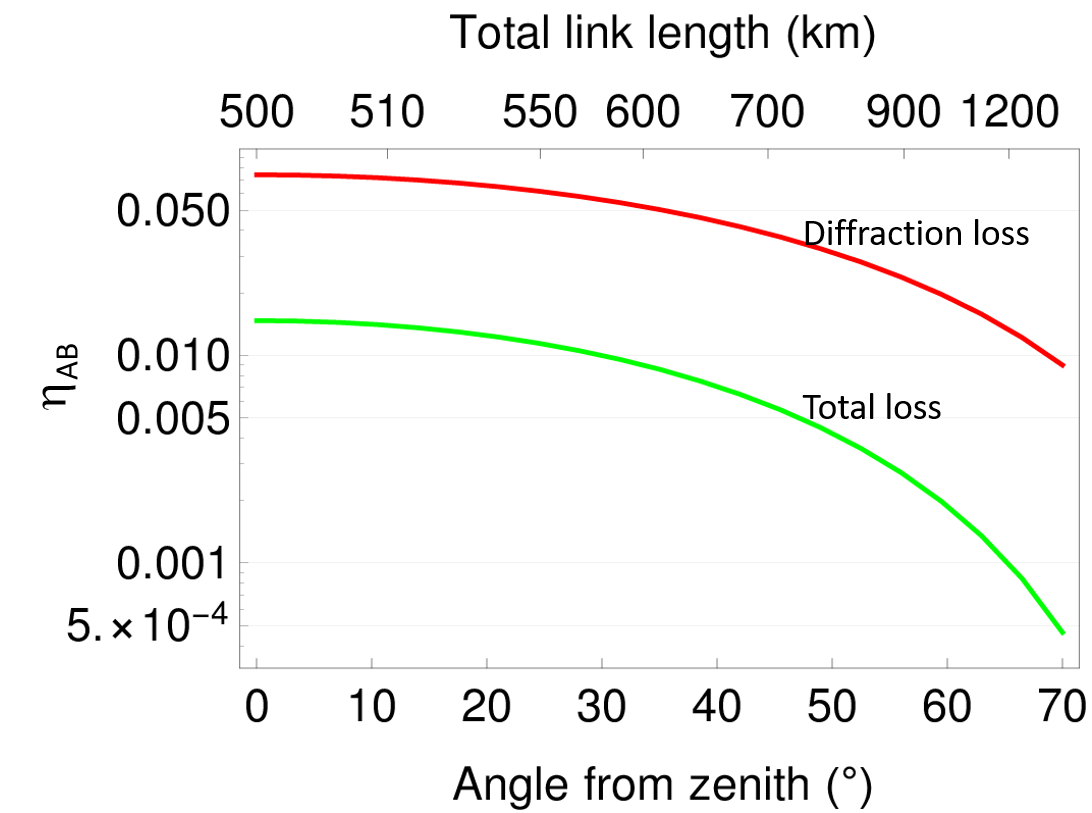}
\caption{Transmittance of the beam sent by A through B's aperture $\eta_{\rm AB}$, Eq.~(\ref{etaAB}), as a function of the position of the satellite.}
\label{fig:etaAB}
\end{figure}

In the previous analysis, we fixed the reflectivity of Eve's spacecraft to bound its size. The value chosen at the end of Sec.~\ref{monitor}, $\alpha=0.1$, is conservative enough if one considers standard spacecrafts. However, lower values of reflectivity parameters can be reached if specific technologies are used. For example, nano-structured coatings \cite{anti} can be laid over opaque surfaces, which can enable reflectivity values $<10^{-2}$. Similar values can be obtained on transparent surfaces (such as lenses), using multi-layer interferometric coatings. In Fig.~\ref{fig:albedo} we report the minimum value of reflectivity parameter of E's surfaces to achieve $\eta_{\rm AE}<1$, for different positions of the satellite with respect to the ground station. This means that, by fixing all other parameters, any value of reflectivity $\alpha < \alpha_{\rm min}$ will lead to $\eta_{\rm AE}=1$, so only values $\alpha > \alpha_{\rm min}$ lead to useful bounds in our analysis. We see from Fig.\ref{fig:albedo} that if E uses such high-performances coatings, the LIDAR setup is no longer sensitive enough. In this case, we have to compensate for the lower reflectivity by increasing the emitted power $P_T$, increasing the directionality of the beam (smaller $\lambda_L$ and/or larger $W_0$) or decreasing the minimum measurable power $P_{\rm min}$.

\begin{figure}[ht!]
\includegraphics[width=0.5\textwidth-15pt]{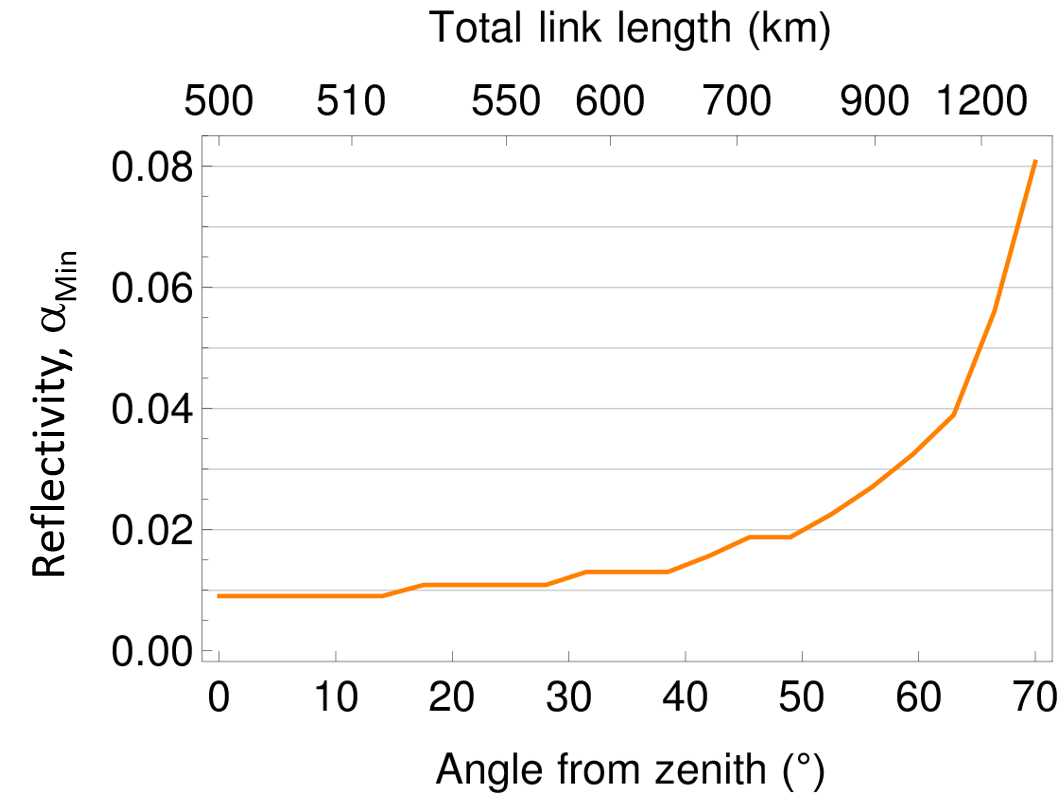}
\caption{Minimum value of reflectivity parameter of E's surfaces to achieve $\eta_{\rm AE}<1$, as a function of the angle of the satellite with respect to the zenith of the ground station.}
    \label{fig:albedo}
\end{figure}

\section{Proof of Lemma~1} \label{app:lemma1proof}

Here we prove Lemma~1.

\begin{proof}
Under the conidtion of a quantum Bob with access to $B$ and $F_0$ modes, the in-principle achievable asymptotic key rate of the QKD protocol in Fig.~\ref{fig:scenarios} is given by the Devetak-Winter bound \cite{Devetak-Winter, Pirandola:AQCrypt}:
\begin{align}
r_k=H(X|E)_k - H(X|BF_0)_k, \quad\mbox{$k=a,b$} \label{asympt-rate}
\end{align}
where, for scenario $k=a,b$ in \cref{fig:scenarios}, $H(X|E)_k$ is the conditional von Neumann entropy of Alice's classical outcome $X$ given Eve's quantum information $E$, whereas $H(X|BF_0)$ is the conditional entropy of Alice's outcome $X$ given Bob's quantum information, which includes the joint state $B F_0$ at the output of the telescope. This is effectively a classical-quantum-quantum (CQQ) scenario, where Alice has a classical state, but Eve and Bob hold on to their quantum states. 

The entropy functions in \cref{asympt-rate} are computed on the quantum states $\rho_{XE}$ and $\rho_{XBF_0}$, which in turn are the reduced density operators of the single-round global quantum state $\rho_{XB F_0 F_1 F_2 E}$. In the following, we compute the latter state for the setups in Figs.~\ref{fig:scenarios}(a) and (b). We denote with $\mathcal{B}$ the map corresponding to the beam splitters, and add subscripts on the maps' symbols to indicate the subsystems on which they act.  

The global state obtained after one round of the protocol in the setting of Fig.~\ref{fig:scenarios}(a) is given by:
\begin{widetext}
\begin{align}
\rho^{(a)}_{XBF_0 F_1 F_2 E}&=\mathcal{E}_T\circ\mathcal{B}_{BF_2}\circ\mathcal{E}'_{F_0 F_1}\circ \mathcal{E}_{BE}\circ\mathcal{B}_{BF_0}\circ M_A \nonumber \\ 
& \Big(\ketbra{\psi_{AB}}{\psi_{AB}}\otimes \ketbra{0}{0}_{F_0} \otimes \ketbra{\psi_F}{\psi_F} \otimes \ketbra{\psi_E}{\psi_E} \otimes \ketbra{0}{0}_{F_2}\Big) .
\label{global-state-b}
\end{align}
Then the reduced state on subsystems $XE$, over which the entropy $H(X|E)$ is computed, is given by:
\begin{align}
\rho^{(a)}_{XE}&=\Tr_{BF_0F_1F_2}[\rho^{(a)}_{XBF_0 F_1 F_2 E}] \nonumber\\
&=\Tr_{BF_0} \Big[\mathcal{E}_{BE}\circ\mathcal{B}_{BF_0}\circ M_A (\ketbra{\psi_{AB}}{\psi_{AB}}\otimes \ketbra{0}{0}_{F_0} \otimes \ketbra{\psi_E}{\psi_E})\Big] \label{XE-state-b},
\end{align}
where we used Kraus' theorem to remove the outer quantum maps that act on the subsystems that are traced out.

Similarly, the global state for Fig.~\ref{fig:scenarios}(b) is given by
\begin{align}
\rho^{(b)}_{XBF_0 F_1 F_2 E}&=\mathcal{E}_T\circ\mathcal{E}_{V}\circ\mathcal{B}_{BF_2}\circ\mathcal{E}'_{F_0 F_1}\circ \mathcal{E}_{BE}\circ\mathcal{B}_{BF_0}\circ M_A \nonumber \\ 
& \Big(\ketbra{\psi_{AB}}{\psi_{AB}}\otimes \ketbra{0}{0}_{F_0} \otimes \ketbra{\psi_F}{\psi_F} \otimes \ketbra{\psi_E}{\psi_E} \otimes \ketbra{0}{0}_{F_2}\Big) \label{global-state-a}.
\end{align}
Note that, compared to \cref{global-state-b}, it only presents the additional CPTP map $\mathcal{E}_{V}$. For the reduced state we obtain:
\begin{align}
\rho^{(b)}_{XE}&=\Tr_{BF_0F_1 F_2}[\rho^{(b)}_{XBF_0 F_1 F_2 E}] \nonumber\\
&=\Tr_{BF_0}\Big[\mathcal{E}_{BE}\circ\mathcal{B}_{BF_0}\circ M_A (\ketbra{\psi_{AB}}{\psi_{AB}}\otimes \ketbra{0}{0}_{F_0} \otimes \ketbra{\psi_E}{\psi_E})\Big] \label{XE-state-a}.
\end{align}
\end{widetext}

From \cref{XE-state-b} and \cref{XE-state-a} we observe that the reduced states on $XE$ are the same for both scenarios, i.e.,  $\rho^{(a)}_{XE}=\rho^{(b)}_{XE}$, which implies that:
\begin{align}
H(X|E)_a = H(X|E)_b \label{equal-entropy},
\end{align}
since the entropy functions are computed on the same quantum state.

From \cref{global-state-b} and \cref{global-state-a} we observe that $\rho^{(b)}_{XBF_0 F_1 F_2 E}$ can be obtained from $\rho^{(a)}_{XBF_0 F_1 F_2 E}$ through the following CPTP map:
\begin{align}
\rho^{(b)}_{XBF_0 F_1 F_2 E} &= \mathcal{R}_{B F_0} (\rho^{(a)}_{XBF_0 F_1 F_2 E}) \label{R-relation},
\end{align}
where 
\begin{align}
\mathcal{R}_{B F_0} := \mathcal{E}_T \circ \mathcal{E}_{V} \circ \mathcal{E}_T^{-1}\label{Rmap}.
\end{align}
By tracing over $F_1 F_2 E$ in \cref{R-relation}, the reduced state of $XBF_0$ in Fig.~\ref{fig:scenarios}(b) can be obtained by applying the CPTP map $\mathcal{R}_{B F_0}$ to the reduced state of Fig.~\ref{fig:scenarios}(a), that is:
\begin{align}
\rho^{(b)}_{XBF_0} = \mathcal{R}_{B F_0} (\rho^{(a)}_{XBF_0}). \label{reduced-R-relation}
\end{align}
By the fact that quantum maps applied on the conditioning system can only increase the conditional von Neumann entropy \cite{Nielsen-Chuang}, we have that 
\begin{align}
H(X|BF_0)_a \leq H(X|\mathcal{R}(BF_0))_a =H(X|BF_0)_b  \label{smaller entropy}.
\end{align}
Finally, by inserting \cref{equal-entropy} and \cref{smaller entropy} into \cref{asympt-rate} we obtain
\begin{align}
r_b \leq r_a,
\end{align}
which concludes the proof.
\end{proof}

\section{A Typical Telescope Model}
\label{App:telescope}

In this Appendix, we look at the implication of the two-mode model we have in \cref{fig:scenarios}, and deduce that the telescope action can be modelled by a beam-splitter like operation, where only one output mode is accessible. The gist of the idea is as follows. Let us denote by $a_r$ the field operator that will be collected by the telescope, after proper focusing, at point $r$ on the outer surface $S$ of the receiver telescope. We then have $[a_r,a_{r'}^\dag] = \delta(r-r')$, and the corresponding annihilation operator for the collected optical mode, in a particular polarization, is given by
\begin{align}
a = \int_S{{\rm d}r g(r) a_r},
\label{eq:telescope}
\end{align}
where $\int_S{{\rm d}r |g(r)|^2} =1$, hence $[a,a^\dag] = 1$. Here, we have assumed that the collected light is coupled to a single-mode fiber. 

In principle, the operator $\mathcal{E}_T$, acting in input modes $B$ and $F_0$ should give us the same output relationship as in \cref{eq:telescope}. In reality, in addition to the bypass channel and Eve's channel, the telescope could capture other background modes as well. In the worst-case scenario, however, we can always assume that all these other modes are controlled by Eve, and she can decide whether leave them as they are, or control them, via its operator $\mathcal{E}$. The implication of this assumption is that we can assume $\mathcal{E}_T$ is a unitary map, which fully models the action of the telescope. In particular, the collected light from mode $F_0$ combined with the collected light from mode $B$ must fully recover the action modeled by \cref{eq:telescope}. That is, if we model the collected light for mode $F_0$ by
\begin{align}
a_F = \int_S{{\rm d}r f(r) a_r},
\end{align}
with $\int_S{{\rm d}r |f(r)|^2} =1$, and the collected light for mode $B$ by
\begin{align}
a_B = \int_S{{\rm d}r h(r) a_r},
\end{align}
with $\int_S{{\rm d}r |h(r)|^2} =1$, we should then have
$[a_F,a_B] = 0$, as they originate from different spatial modes, and
\begin{align}
a = \alpha a_F + \beta a_B,
\end{align}
to make sure the two modes fully model the collected light by the telescope. The choice of linear combination above matches what a typical telescope does to different impinging modes of light. The first condition implies that the weight functions $f$ and $h$ must satisfy the orthogonality condition $\int_S{{\rm d}r f(r)h^\ast(r)} =0$, whereas the second condition implies that 
\begin{align}
g(r) = \alpha f(r) + \beta h(r),
\end{align}
which results in 
\begin{align}
\alpha = \int_S{{\rm d}r g(r) f^\ast(r)}, \beta =  \int_S{{\rm d}r g(r) h^\ast(r)}.
\end{align}
In addition, given that $g$, $f$ and $h$ are normalized and the latter two are orthogonal, we have $|\alpha|^2 + |\beta|^2 = 1$, which results in the following relationship
\begin{align}
\label{eq:tel_final}
a = \sqrt{\eta_{\rm T}}a_B + \sqrt{1-\eta_{\rm T}}a_F,
\end{align}
where
\begin{align}
\eta_{\rm T} = \int_S{{\rm d}r g(r) h^\ast(r)} = 1 - \int_S{{\rm d}r g(r) f^\ast(r)}.
\end{align}
The expression in \cref{eq:tel_final} resembles one output of a beam splitter with transmissivity $\eta_{\rm T}$, as we have used in the main text.

\section{Covariance matrix calculations}
\label{app:CVsetup}

In this Appendix, we calculate the covariance matrix (CM) for the setting given in \cref{fig:cv_telmodel}. While this is a special channel configuration, with proper choices of parameters, it can be used to model several cases of interest to our work. For instance, by choosing $\eta_{\rm S}$ to be zero, we effectively remove the bypass channel, and the remaining setup would then correspond to an optimal attack by Eve in the extended Alice-Bob model so long as the values assigned to $\eta_{\rm AE}$, $\eta_{\rm E}$, and $\eta_{\rm T}$ are within zero and one. 

To calculate the CM between all parties involved, i.e., Alice, Bob, and Eve, we consider the entanglement-based picture in \cref{fig:cv_telmodel} and start with the CM corresponding to the TMSV state $|\psi_{AB}\rangle$ with variance $V$, given by
\begin{align}
\textbf{V}_{AB}= \left(\begin{array}{cc}
V \mathbbm{1} & c {\mathbb Z}   \\
c {\mathbb Z} & V \mathbbm{1}
\end{array}\right) ,
\end{align}
where $c=\sqrt{V^2-1}$. On one leg of this TMSV state, Alice performs a heterodyne measurement, while she sends the other beam toward Bob. On its way, the latter beam experiences some pure loss, modelled by $\eta_{\rm AE}$, which splits the signal into two beams. One undergoes Eve's attack, whereby it would interfere, at a beam splitter with transmissivity $\eta_{\rm E}$, with Eve's TMSV state $|\psi_{EE'}\rangle$ with variance $V_E$, and the following CM
\begin{align}
\textbf{V}_{EE'}= \left(\begin{array}{cc}
V_E \mathbbm{1} & c_E {\mathbb Z}   \\
c_E {\mathbb Z} & V_E \mathbbm{1}
\end{array}\right),
\end{align}
where $c_E=\sqrt{V_E^2-1}$. The other output of $\eta_{\rm AE}$ beam splitter undergoes  additional loss, which is modelled via the beam splitter with transmissivity $\eta_{\rm S}$. Eventually, the two beams reconcile at the last beam splitter with transmissivity $\eta_{\rm T}$. 

Using linear optics algebra, we have modeled the above beam splitter operations using relevant matrices to find the CM of the purified state between all modes, i.e., $ABEE'F_0F_1$. After tracing out modes $F_0$ and $F_1$, as they are assumed inaccessible to all parties, we obtain
\begin{align}
\textbf{V}_{ABEE'}= \left(\begin{array}{cccc}
\label{CM:ABE}
V \mathbbm{1} & C_{AB} {\mathbb Z} &  0 \mathbbm{1}   & C_{AE'} {\mathbb Z}   \\
C_{AB} {\mathbb Z}   & V_B \mathbbm{1} &  C_{BE}  {\mathbb Z}   & C_{BE'}   \mathbbm{1}  \\
0 \mathbbm{1}   & C_{BE}  {\mathbb Z}    &  V_E \mathbbm{1} & C_{EE'}  {\mathbb Z}   \\
C_{AE'} {\mathbb Z}  & C_{BE'}\mathbbm{1}  &  C_{EE'} {\mathbb Z}  & V_{E'}    \mathbbm{1} 
\end{array}\right),
\end{align}
where the first row and column correspond to mode $A$ and its covariance elements with other modes, the second to $B$, and the third and forth to $E$ and $E'$, respectively. In \cref{CM:ABE},
\begin{align}
\label{eq:CM-param}
C_{AB} = &  \sqrt{T_{\rm eq}} c    \nonumber \\ 
C_{AE'} = & -\sqrt{\eta_{\rm AE}(1-\eta_{\rm E})} c   \nonumber \\  
V_B = & T_{\rm eq} (V-1) + 1 + \xi_{\rm eq}^{\rm Rx} &            \nonumber \\ 
C_{BE} = & \sqrt{(1-\eta_{\rm E})\eta_{\rm T}} c_E  \nonumber \\ 
C_{BE'} = & \sqrt{\eta_{\rm E} (1-\eta_{\rm E})\eta_{\rm T}} \Big( -\big(\eta_{\rm AE}(V-1)+1\big) +V_E \Big)   \nonumber \\  
& - \sqrt{ \eta_{\rm AE}(1-\eta_{\rm AE}) (1-\eta_{\rm E}) \eta_{\rm S} (1-\eta_{\rm T}) }(V-1)  \nonumber \\  
C_{EE'} = &  \sqrt{\eta_{\rm E}} c_E    \nonumber \\  
V_{E'} = &  (1-\eta_{\rm E})[\eta_{\rm AE}(V-1)+1]+\eta_{\rm E} V_E           \nonumber \\  
\end{align}
where
\begin{align}
\label{eq:Teq}
T_{\rm eq}= \Big( \sqrt{\eta_{\rm AE} \eta_{\rm E} \eta_{\rm T} }+\sqrt{(1-\eta_{\rm AE})\eta_{\rm S} (1-\eta_{\rm T})}\Big)^2,
\end{align}
appearing in the coefficient of $C_{AB}$ entry, is the observed value of transmissivity in the link, and
\begin{align}
\label{eq:Xieq}
\xi_{\rm eq}^{\rm Rx}= T_{\rm eq} \xi = (1-\eta_E)\eta_T (V_E-1)
\end{align}
is effectively the observed value of excess noise at the receiver, with $\xi$ being its equivalent at the transmitter end. As one would expect, the excess noise is a function of Eve's variance $V_E$ and is simply the amount of noise that enters Bob's receiver via the two beam splitters on the path between Bob and Eve. Similarly, $\sqrt{T_{\rm eq}}$ in \cref{eq:Teq} is the sum of the amplitudes in the two pathways from Alice to Bob. Similar calculations show that, if instead of the pure-loss bypass channel, we assume a thermal-loss bypass channel with a noise variance $V_{\rm S}$, there would be an additional term for $ \xi_{\rm eq}^{\rm Rx}$, given by $(1-\eta_{\rm S})(1-\eta_{\rm T})(V_{\rm S} -1)$, which accounts for the noise coming from the bypass channel, with no change in $T_{\rm eq}$.

The above CM can be used to calculate the key rate in different scenarios. For any given observed value of $T_{\rm eq}\leq 1$ and $\xi \geq 0$,  we can search the $\eta_{\rm S}-\eta_{\rm T}$ space for the minimum guaranteed key rate. One could also account for other sources of trusted noise at the receiver, such as electronic noise, by adjusting the above parameters, but for the purpose of our discussion on CV-QKD in the restricted case, the above framework is sufficiently detailed.

\bibliography{references}

\end{document}